\def\@captype{figure}
\newtheorem{theorem}{Theorem}[section]
\newtheorem{formula}{Formula}[section]
\newtheorem{lemma}[theorem]{Lemma}
\newtheorem{remark}[theorem]{Remark}
\newtheorem{definition}[theorem]{Definition}
\numberwithin{equation}{section}
\newcommand{\N}{\mathbb{N}}
\newcommand{\cG}{\mathcal{G}}
\newcommand{\vC}{\vec{C}}
\newcommand{\vt}{\vec{\theta}}
\newcommand{\vf}{\vec{\phi}}
\newcommand{\Pmic}{P_{\mathrm{mic}}}
\newcommand{\Pcan}{P_{\mathrm{can}}}
\newcommand{\be}{\begin{equation}}
\newcommand{\ee}{\end{equation}}
\title{Covariance structure behind breaking of ensemble equivalence in random graphs}
\author{

Diego Garlaschelli
\footnotemark[1]
\\

Frank den Hollander
\footnotemark[2]
\\

Andrea Roccaverde
\footnotemark[1]\,\,\,\,\footnotemark[2]
}
\date{\today}
\begin{document}

\maketitle 

\begin{abstract}
For a random graph subject to a topological constraint, the \emph{microcanonical ensemble} 
requires the constraint to be met by every realisation of the graph (`hard constraint'), while 
the \emph{canonical ensemble} requires the constraint to be met only on average (`soft 
constraint'). It is known that \emph{breaking of ensemble equivalence} may occur when the 
size of the graph tends to infinity, signalled by a non-zero specific relative entropy of 
the two ensembles. In this paper we analyse a formula for the relative entropy of generic 
discrete random structures recently put forward by Squartini and Garlaschelli. We consider 
the case of a random graph with a given degree sequence (configuration model), and show 
that in the dense regime this formula correctly predicts that the specific relative entropy is 
determined by the scaling of the determinant of the matrix of canonical covariances of the 
constraints. The formula also correctly predicts that an extra correction term is required in 
the sparse regime and in the ultra-dense regime. We further show that the different expressions 
correspond to the degrees in the canonical ensemble being asymptotically \emph{Gaussian} 
in the dense regime and asymptotically \emph{Poisson} in the sparse regime (the latter 
confirms what we found in earlier work), and the dual degrees in the canonical ensemble 
being asymptotically \emph{Poisson} in the ultra-dense regime. In general, we show that 
the degrees follow a multivariate version of the \emph{Poisson-Binomial} distribution in the 
canonical ensemble.

\medskip\noindent
{\it MSC 2010.} 
60C05, 
60K35, 
82B20. 

\medskip\noindent
{\it Key words and phrases.} Random graph, topological constraints, microcanonical ensemble, 
canonical ensemble, relative entropy, equivalence vs.\ nonequivalence, covariance matrix.

\medskip\noindent
{\it Acknowledgment.} 
DG and AR are supported by EU-project 317532-MULTIPLEX. FdH and AR are supported by NWO 
Gravitation Grant 024.002.003--NETWORKS.
\end{abstract}

\newpage


\section{Introduction and main results}
\label{S1}


\subsection{Background and outline}
\label{S1.1}

For most real-world networks, a detailed knowledge of the architecture of the network
is not available and one must work with a probabilistic description, where the network 
is assumed to be a random sample drawn from a set of allowed configurations that are 
consistent with a set of known \emph{topological constraints}~\cite{SMG15}. Statistical 
physics deals with the definition of the appropriate probability distribution over the set 
of configurations and with the calculation of the resulting properties of the system. Two 
key choices of probability distribution are: 
\begin{itemize}
\item[(1)] 
the \emph{microcanonical ensemble}, where the constraints are \emph{hard} (i.e., are 
satisfied by each individual configuration); 
\item[(2)] 
the \emph{canonical ensemble}, where the constraints are \emph{soft} (i.e., hold as 
ensemble averages, while individual configurations may violate the constraints).
\end{itemize} 
(In both ensembles, the entropy is \emph{maximal} subject to the given constraints.)   

In the limit as the size of the network diverges, the two ensembles are traditionally 
\emph{assumed} to become equivalent, as a result of the expected vanishing of the 
fluctuations of the soft constraints (i.e., the soft constraints are expected to become 
asymptotically hard). However, it is known that this equivalence may be broken, as 
signalled by a non-zero specific relative entropy of the two ensembles (= on an 
appropriate scale). In earlier work various scenarios were identified for this phenomenon 
(see \cite{SdMdHG15}, \cite{GHR17}, \cite{dHMRS17} and references therein). In the 
present paper we take a fresh look at breaking of ensemble equivalence by analysing 
a formula for the relative entropy, based on the \emph{covariance structure} of the 
canonical ensemble, recently put forward by Squartini and Garlaschelli~\cite{GS}. 
We consider the case of a random graph with a given degree sequence (configuration 
model) and show that this formula correctly predicts that the specific relative entropy 
is determined by the scaling of the determinant of the covariance matrix of the constraints 
in the dense regime, while it requires an extra correction term in the sparse regime and 
the ultra-dense regime. We also show that the different behaviours found in the different 
regimes correspond to the degrees being asymptotically Gaussian in the dense regime
and asymptotically Poisson in the sparse regime, and the dual degrees being asymptotically 
Poisson in the ultra-dense regime. We further note that, in general, in the canonical 
ensemble the degrees are distributed according to a multivariate version of the 
\emph{Poisson-Binomial} distribution~\cite{W93}, which admits the Gaussian distribution 
and the Poisson distribution as limits in appropriate regimes.

Our results imply that, in all three regimes, ensemble equivalence breaks down in the presence 
of an extensive number of constraints. This confirms the need for a \emph{principled choice} of 
the ensemble used in practical applications. Three examples serve as an illustration:
\begin{itemize}
\item[(a)]
\emph{Pattern detection} is the identification of nontrivial structural properties in a real-world 
network through comparison with a suitable \emph{null model}, i.e., a random graph model that 
preserves certain local topological properties of the network (like the degree sequence) 
but is otherwise completely random.
\item[(b)] 
\emph{Community detection} is the identification of groups of nodes that are more 
densely connected with each other than expected under a null model, which is a 
popular special case of pattern detection.
\item[(c)]
\emph{Network reconstruction} employs purely local topological information to infer 
higher-order structural properties of a real-world network. This problem arises whenever 
the global properties of the network are not known, for instance, due to confidentiality or 
privacy issues, but local properties are. In such cases, optimal inference about the network 
can be achieved by maximising the entropy subject to the known local constraints, which 
again leads to the two ensembles considered here. 
\end{itemize}
Breaking of ensemble equivalence means that different choices of the ensemble lead to 
asymptotically different behaviours. Consequently, while for applications based on 
ensemble-equi\-valent models the choice of the working ensemble can be arbitrary and 
can be based on mathematical convenience, for those based on ensemble-nonequivalent 
models the choice should be dictated by a criterion indicating which ensemble is the 
appropriate one to use. This criterion must be based on the \emph{a priori} knowledge 
that is available about the network, i.e., which form of the constraint (hard or soft) 
applies in practice.

The remainder of this section is organised as follows. In Section~\ref{S1.2} we define the 
two ensembles and their relative entropy. In Section~\ref{S1.3} we introduce the constraints 
to be considered, which are on the \emph{degree sequence}.  In Section~\ref{S1.4} we 
introduce the various regimes we will be interested in and state a formula for the relative 
entropy when the constraint is on the degree sequence. In Section~\ref{S1.5} we state the 
formula for the relative entropy proposed in \cite{GS} and present our main theorem. In 
Section~\ref{S1.6} we close with a discussion of the interpretation of this theorem and an 
outline of the remainder of the paper. 


\subsection{Microcanonical ensemble, canonical ensemble, relative entropy}
\label{S1.2}

For $n \in \N$, let $\cG_n$ denote the set of all simple undirected graphs with $n$ nodes.
Any graph $G\in\cG_n$ can be represented as an $n \times n$ matrix with elements 
\be
g_{ij}(G) =
\begin{cases}
1\qquad \mbox{if there is a link between node\ } i \mbox{\ and node\ } j,\\ 
0 \qquad \mbox{otherwise.}
\end{cases}
\ee
Let $\vC$ denote a vector-valued function on $\cG_n$. Given a specific value $\vC^*$, 
which we assume to be \emph{graphical}, i.e., realisable by at least one graph in $\cG_n$, 
the \emph{microcanonical probability distribution} on $\cG_n$ with \emph{hard constraint} 
$\vC^*$ is defined as
\begin{equation}
\Pmic(G) =
\left\{
\begin{array}{ll} 
\Omega_{\vC^*}^{-1}, \quad & \text{if } \vC(G) = \vC^*, \\ 
0, & \text{else},
\end{array}
\right.
\label{eq:PM}
\end{equation}
where 
\begin{equation}
\Omega_{\vC^*} = | \{G \in \cG_n\colon\, \vC(G) = \vC^* \} |
\end{equation}
is the number of graphs that realise $\vC^*$. The \emph{canonical probability distribution} 
$\Pcan(G)$ on $\cG_n$ is defined as the solution of the maximisation of the 
\emph{entropy} 
\begin{equation}
S_n(\Pcan) = - \sum_{G \in \cG_n} \Pcan(G) \ln \Pcan(G)
\end{equation}
subject to the normalisation condition $\sum_{G \in \cG_n} \Pcan(G) = 1$ and to the 
\emph{soft constraint} $\langle \vC \rangle  = \vC^*$, where $\langle \cdot \rangle$ 
denotes the average w.r.t.\ $\Pcan$. This gives
\begin{equation}
\Pcan(G) = \frac{\exp[-H(G,\vt^*)]}{Z(\vt^*)},
\label{eq:PC}
\end{equation}
where 
\begin{equation}
H(G, \vt\,) = \vt \cdot \vC(G)
\label{eq:H}
\end{equation}
is the \emph{Hamiltonian} and
\be
Z(\vt\,) = \sum_{G \in \cG_n} \exp[-H(G, \vt\,)]
\ee
is the \emph{partition function}. In \eqref{eq:PC} the parameter $\vt$ must be set equal to 
the particular value $\vt^*$ that realises $\langle \vC \rangle  = \vC^*$. This value is unique
and maximises the likelihood of the model given the data (see \cite{GL08}).

The \emph{relative entropy} of $\Pmic$ w.r.t.\ $\Pcan$ is~\cite{T15}
\begin{equation}
S_n(\Pmic \mid \Pcan) 
= \sum_{G \in \cG_n} \Pmic(G) \log \frac{\Pmic(G)}{\Pcan(G)},
\label{eq:KL1}
\end{equation}
and the \emph{relative entropy $\alpha_n$-density} is~\cite{GS} 
\begin{equation}
s_{\alpha_n} = {\alpha_n}^{-1}\,S_n(\Pmic \mid \Pcan),
\label{eq:sn}
\end{equation}
where $\alpha_n$ is a \emph{scale parameter}. The limit of the relative entropy 
$\alpha_n$-density is defined as
\begin{equation}
s_{\alpha_\infty}\equiv\lim_{n \to \infty}s_{\alpha_n} 
= \lim_{n \to \infty} {\alpha_n}^{-1}\,S_n(\Pmic \mid \Pcan) \in [0,\infty],
\label{eq:criterion1}
\end{equation}
We say that the microcanonical and canonical ensemble are equivalent \emph{on scale} $\alpha_n$ (or \emph{with speed} 
$\alpha_n$) if and only if \footnote{As shown in \cite{T15} within the context of interacting particle systems, 
relative entropy is the most sensitive tool to monitor breaking of ensemble equivalence (referred 
to as breaking \emph{in the measure sense}). Other tools are interesting as well, depending on 
the `observable' of interest~\cite{T}.}  
\be
s_{\alpha_\infty} = 0.
\label{salphao}
\ee
Clearly, if the ensembles are equivalent with speed $\alpha_n$, then they are also equivalent with any other faster speed $\alpha_n'$ such that $\alpha_n=o(\alpha_n')$. Therefore a natural choice for $\alpha_n$ is the `critical' speed such that the limiting $\alpha_n$-density is positive and finite, i.e. $s_{\alpha_\infty}\in(0,\infty)$. In the following, we will use $\alpha_n$ to denote this natural speed (or scale), and not an arbitrary one. This means that the ensembles are equivalent on all scales faster than $\alpha_n$ and are nonequivalent on scale $\alpha_n$ or slower. The critical scale $\alpha_n$ depends on the constraint at hand as well as its value.
For instance, if the 
constraint is on the \emph{degree sequence}, then in the sparse regime the natural scale turns 
out to be $\alpha_n=n$ \cite{SdMdHG15}, \cite{GHR17} (in which case $s_{\alpha_\infty}$ is 
the specific relative entropy `per vertex'), while in the dense regime it turns out to be $\alpha_n 
= n\log n$, as shown below. On the other hand, if the constraint is on the \emph{total numbers 
of edges and triangles}, with values different from what is typical for the Erd\H{o}s-Renyi random 
graph in the dense regime, then the natural scale turns out to be $\alpha_n=n^2$ \cite{dHMRS17} 
(in which case $s_{\alpha_\infty}$ is the specific relative entropy `per edge'). Such a severe 
breaking of ensemble equivalence comes from `frustration' in the constraints.  

Before considering specific cases, we recall an important observation made in \cite{SdMdHG15}. 
The definition of $H(G,\vt\,)$ ensures that, for any $G_1,G_2\in\cG_n$, $\Pcan(G_1)=\Pcan(G_2)$ 
whenever $\vC(G_1)=\vC(G_2)$ (i.e., the canonical probability is the same for all graphs having 
the same value of the constraint). We may therefore rewrite \eqref{eq:KL1} as
\begin{equation}
S_n(\Pmic \mid \Pcan) = \log \frac{\Pmic(G^*)}{\Pcan(G^*)},
\label{eq:KL2}
\end{equation}
where $G^*$ is \emph{any} graph in $\cG_n$ such that $\vC(G^*) =\vC^*$ (recall that we 
have assumed that $\vC^*$ is realisable by at least one graph in $\cG_n$). The definition 
in~\eqref{eq:criterion1} then becomes  
\begin{equation}
s_{\alpha_\infty}=\lim_{n \to \infty} {\alpha_n}^{-1}\, \big[\log{\Pmic(G^*)} - \log{\Pcan(G^*)} \big],
\label{eq:criterion3}
\end{equation}
which shows that breaking of ensemble equivalence coincides with $\Pmic(G^*)$ and $\Pcan(G^*)$ 
having different large deviation behaviour on scale $\alpha_n$. Note that \eqref{eq:criterion3} 
involves the microcanonical and canonical probabilities of a \emph{single} configuration $G^*$ 
realising the hard constraint. Apart from its theoretical importance, this fact greatly simplifies 
mathematical calculations. 

To analyse breaking of ensemble equivalence, ideally we would like to be able to identify an 
underlying \emph{large deviation principle} on a natural scale $\alpha_n$. This is generally 
difficult, and so far has only been achieved in the dense regime with the help of \emph{graphons} 
(see \cite{dHMRS17} and references therein). In the present paper we will approach 
the problem from a different angle, namely, by looking at the \emph{covariance matrix of the 
constraints} in the canonical ensemble, as proposed in \cite{GS}.      

Note that all the quantities introduced above in principle depend on $n$. However, except for 
the symbols $\cG_n$ and $S_n(\Pmic \mid \Pcan)$, we suppress the $n$-dependence from 
the notation.


\subsection{Constraint on the degree sequence}
\label{S1.3}

The degree sequence of a graph $G\in \cG_n$ is defined as $\vec{k}(G) = (k_i(G))_{i=1}^n$ 
with $k_i(G)=\sum_{j \neq i}g_{ij}(G)$. In what follows we constrain the degree sequence to 
a \emph{specific value} $\vec{k}^*$, which we assume to be \emph{graphical}, i.e., there is at 
least one graph with degree sequence $\vec{k}^*$. The constraint is therefore
\be
\label{degcon}
\vC^* = \vec{k}^*= (k_i^*)_{i=1}^n \in \{1,2,\dots ,n-2\}^n,
\ee 
The microcanonical ensemble, when the constraint is on the degree sequence, is known 
as the \emph{configuration model} and has been studied intensively (see 
\cite{SMG15,SdMdHG15,vdH17}). For later use we recall the form of the canonical 
probability in the configuration model, namely,
\begin{equation}
\label{canonical}
\Pcan(G) = \prod_{1 \leq i<j \leq n}\left( p_{ij}^* \right)^{g_{ij}(G)} \left( 1- p_{ij}^* \right)^{1-g_{ij}(G)}
\end{equation}
with 
\be
\label{canonical1} 
p_{ij}^* = \frac{e^{-\theta_i^*-\theta_j^*}}{1 + e^{-\theta_i^*-\theta_j^*}}
\ee 
and with the vector of Lagrange multipliers tuned to the value $\vec{\theta}^*=(\theta_i^*)_{i=1}^n$ 
such that 
\begin{equation}
\label{canonical2}
\langle k_i \rangle = \sum_{j\neq i}p_{ij}^* = k_i^*, \qquad 1\le i\le n.
\end{equation}

Using \eqref{eq:KL2}, we can write
\be
\label{interp}
S_n(\Pmic \mid \Pcan) = \log \frac{\Pmic(G^*)}{\Pcan(G^*)} 
= -\log [\Omega_{\vec{k^*}}\Pcan(G^*)]= -\log Q[\vec{k^*}](\vec{k^*}),
\ee
where $\Omega_{\vec{k}}$ is the number of graphs with degree sequence $\vec{k}$,
\begin{equation}
Q[\vec{k^*}](\vec{k}\,) = \Omega_{\vec{k}}\,\Pcan\big(G^{\vec{k}}\big)
\label{eq:QOmega}
\end{equation}
is the probability that the degree sequence is equal to $\vec{k}$ under the canonical ensemble 
with constraint $\vec{k^*}$, $G^{\vec{k}}$ denotes an arbitrary graph with degree sequence 
$\vec{k}$, and $\Pcan\big(G^{\vec{k}}\big)$ is the canonical probability in~\eqref{canonical} 
rewritten for one such graph:
\be 
\Pcan\big(G^{\vec{k}}\big) 
= \prod_{1 \leq i<j \leq n} \left( p_{ij}^* \right)^{g_{ij}(G^{\vec{k}})} 
\left( 1- p_{ij}^* \right)^{1-g_{ij}(G^{\vec{k}})}
= \prod_{i=1}^n (x_i^*)^{k_i} \prod_{1\le i<j \le n}(1+x_i^* x_j^*)^{-1}.
\label{px}
\ee
In the last expression, $x_i^* = e^{-\theta_i^*}$, and $\vec{\theta}=(\theta_i^*)_{i=1}^n$ is the 
vector of Lagrange multipliers coming from \eqref{canonical1}.

\subsection{Relevant regimes}
\label{S1.4}

The breaking of ensemble equivalence was analysed in \cite{GHR17} in the so-called 
\emph{sparse regime}, defined by the condition
\be
\max_{1\leq i \leq n} k^*_i = o(\sqrt{n}\,).
\label{eq:sparse1}
\ee
It is natural to consider the opposite setting, namely, the \emph{ultra-dense regime} 
in which the degrees are close to $n-1$, 
\be
\max_{1\leq i \leq n}(n-1-k^*_i) = o(\sqrt{n}\,).
\label{eq:ultradense}
\ee
This can be seen as the \emph{dual} of the sparse regime. We will see in Appendix~\ref{appC} 
that under the map $k^*_i \mapsto n-1-k^*_i$ the microcanonical ensemble and the canonical 
ensemble preserve their relationship, in particular, their relative entropy is invariant.

It is a challenge to study breaking of ensemble equivalence \emph{in between} the sparse 
regime and the ultra-dense regime, called the \emph{dense regime}. In what follows we
 consider a subclass of the dense regime, called the \emph{$\delta$-tame regime}, in which 
the graphs are subject to a certain uniformity condition. 

\begin{definition}
\label{delta}
A degree sequence $\vec{k}^*= (k_i^*)_{i=1}^n$ is called $\delta$-tame if and only if there exists 
a $\delta\in \left(0,\frac{1}{2}\right]$ such that
\begin{equation}
\label{deltatamedef}
\delta \leq p_{ij}^* \leq 1-\delta, \qquad 1\leq i \neq j \leq n,
\end{equation}
where $p_{ij}^*$ are the canonical probabilities in \eqref{canonical}--\eqref{canonical2}.
\end{definition}

\begin{remark}
{\rm The name $\delta$-tame is taken from \cite{BH}, which studies the number of graphs with 
a $\delta$-tame degree sequence. Definition~\ref{delta} is actually a reformulation of the definition 
given in \cite{BH}. See Appendix~\ref{appB} for details.}
\end{remark}

The condition in \eqref{deltatamedef} implies that
\begin{equation}
\label{deltatameondegrees}
(n-1)\delta \leq k_i^* \leq (n-1)(1-\delta), \qquad 1\leq i \leq n,
\end{equation}
i.e., $\delta$-tame graphs are nowhere too thin (sparse regime) nor too dense (ultra-dense regime). 

It is natural to ask whether, conversely, condition \eqref{deltatameondegrees} implies that the degree
sequence is $\delta'$-tame for some $\delta'=\delta'(\delta)$. Unfortunately, this question is not easy
to settle, but the following lemma provides a partial answer. 

\begin{lemma}
\label{reverse}
Suppose that $\vec{k}^*= (k_i^*)_{i=1}^n$ satisfies
\begin{equation}
(n-1)\alpha \leq k_i^* \leq (n-1)(1-\alpha), \qquad 1\leq i \leq n,
\end{equation} 
for some $\alpha \in (\tfrac14,\tfrac12]$. Then there exist $\delta = \delta(\alpha)>0$ and $n_0=n_0(\alpha) 
\in \N$ such that $\vec{k}^*= (k_i^*)_{i=1}^n$ is $\delta$-tame for all $n \geq n_0$.
\end{lemma}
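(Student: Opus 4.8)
The plan is to work throughout with the \emph{fugacities} $x_i := x_i^* = e^{-\theta_i^*}$, so that $p_{ij}^* = \psi(x_i x_j)$ with $\psi(t) = t/(1+t)$ strictly increasing on $[0,\infty)$. Being $\delta$-tame is then equivalent to all the \emph{products} $x_i x_j$ ($i \neq j$) lying in a fixed compact subinterval of $(0,\infty)$. The involution $x_i \mapsto 1/x_i$ sends $p_{ij}^* \mapsto 1-p_{ij}^*$ and $k_i^* \mapsto n-1-k_i^*$, so it preserves the hypothesis $(n-1)\alpha \le k_i^* \le (n-1)(1-\alpha)$; by this duality it suffices to establish a uniform \emph{lower} bound $p_{ij}^* \ge \delta(\alpha)$, the matching upper bound following automatically. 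Ordering $x_1 \le \cdots \le x_n$, this amounts to showing that the smallest product $\mu := x_1 x_2 = \min_{i\neq j} x_i x_j$ stays bounded away from $0$.

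First I would record cheap a priori bounds. Applying the degree lower bound at the smallest vertex together with $\psi(x_1 x_j) \le \psi(x_1 x_n)$ gives $(n-1)\alpha \le k_1^* \le (n-1)\psi(x_1 x_n)$, whence $x_1 x_n \ge \frac{\alpha}{1-\alpha}$; dually $x_1 x_n \le \frac{1-\alpha}{\alpha}$. In particular $p_{ij}^* \le 1-\alpha$ for all $i,j$, so only the lower bound can fail, and if it does then along some subsequence $\mu \to 0$, forcing $x_1 \to 0$ and, via $x_1 x_n \asymp 1$, also $x_n \to \infty$. The heart of the argument is to show that $\mu \to 0$ is incompatible with $\alpha > \tfrac14$. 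Passing to a further subsequence, I would let the empirical distribution of the log-fugacities $z_i := \log x_i$ converge vaguely on $\R$ and record the three \emph{mass fractions} $m_-, m_0, m_+ \ge 0$ (with $m_- + m_0 + m_+ = 1$) of vertices whose $z_i$ escape to $-\infty$, stay bounded, and escape to $+\infty$; write $\mathcal L, \mathcal M, \mathcal H$ for the corresponding asymptotic vertex blocks. Since $\psi(u+v) \to 0$ as $u+v \to -\infty$ and $\to 1$ as $u+v \to +\infty$, pairs inside $\mathcal L$ and pairs $\mathcal L$--$\mathcal M$ contribute probabilities $\to 0$, while pairs inside $\mathcal H$ and pairs $\mathcal H$--$\mathcal M$ contribute $\to 1$.

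Summing degrees over each block and double counting then gives, with $X := \sum_{i \in \mathcal L, j \in \mathcal H} p_{ij}^* \in [0, m_- m_+ n^2]$, first $\sum_{i\in\mathcal L} k_i^* = X + o(n^2) \ge |\mathcal L|(n-1)\alpha = m_-\alpha\, n^2(1+o(1))$, hence $X \ge m_-\alpha\, n^2(1+o(1))$; and second $\sum_{i\in\mathcal H} k_i^* = m_+(1-m_-)n^2 + X + o(n^2) \le |\mathcal H|(n-1)(1-\alpha) = m_+(1-\alpha)n^2(1+o(1))$, hence $X \le m_+(m_- - \alpha)n^2(1+o(1))$. Comparing the two bounds yields $m_-\alpha \le m_+(m_- - \alpha)$, i.e. $\alpha(m_- + m_+) \le m_+ m_-$, so $\alpha \le \frac{m_+ m_-}{m_+ + m_-} \le \frac{m_+ + m_-}{4} \le \frac14$ by AM--HM and $m_- + m_+ \le 1$. (One also checks $m_+ > 0$: if $m_+ = 0$ then either $m_- = 0$, so all $z_i$ are bounded and $\mu \not\to 0$, a contradiction, or else the $\mathcal L$-vertices cannot reach degree $(n-1)\alpha$ since all their contributions vanish; given $m_+ > 0$, the bound $X \ge 0$ forces $m_- \ge \alpha > 0$.) This contradicts $\alpha > \tfrac14$, so $\mu \ge c(\alpha) > 0$ for all large $n$, and one may take $\delta(\alpha) = \psi(c(\alpha))$.

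The elementary ingredients are the monotonicity bounds and the block double counting; the genuine difficulty, which I expect to be the main obstacle, is the \textbf{compactness bookkeeping} in the third paragraph: rigorously decomposing into escaping and bounded blocks and proving the $o(n^2)$ estimates uniformly. The subtlety is real, and it is what forces the limiting point of view rather than a hard threshold: a single cut tied to $x_1$ leaves the ``middle'' block too wide, so the $\mathcal L$--$\mathcal M$ sum need not be $o(n^2)$, and a crude one-threshold count yields only the weaker sufficient condition $\alpha > \frac{3-\sqrt5}{2} \approx 0.382$ rather than the sharp $\tfrac14$. I would instead split at fixed levels $\pm M$, prove the estimates with an error $o_M(n^2)$, and send $M \to \infty$ after $n \to \infty$; the two-block variational inequality $\max \frac{m_+ m_-}{m_+ + m_-} = \tfrac14$ is exactly what pins the threshold at $\tfrac14$ and matches the extremal half-sparse/half-dense degree sequence.
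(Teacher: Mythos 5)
The paper does not actually prove this lemma from scratch: its entire proof is a citation of Barvinok--Hartigan \cite[Theorem 2.1]{BH} with $\beta=1-\alpha$, which is where the threshold $\alpha>\tfrac14$ comes from. Your proposal is therefore a genuinely different, self-contained route, and its skeleton is sound: the reduction by the involution $x_i\mapsto 1/x_i$ (which by uniqueness of the Lagrange multipliers does map the canonical solution for $\vec k^*$ to that for $n-1-\vec k^*$) correctly halves the work; the compactness set-up with escaping mass fractions $m_-,m_+$ is the right frame; and the block double count $m_-\alpha\le X/n^2\le m_+(m_--\alpha)$, hence $\alpha\le m_+m_-/(m_++m_-)\le\tfrac14$, reproduces exactly the Barvinok--Hartigan threshold (and matches their extremal half-sparse/half-dense sequence), which is strong evidence the bookkeeping closes. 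What the paper's route buys is a two-line proof plus explicit constants $\delta(\alpha),n_0(\alpha)$ inherited from \cite{BH}; what yours buys is independence from a rather heavy external theorem, at the price of the two-scale truncation argument (split at $\pm M$, take $n\to\infty$ before $M\to\infty$) that you correctly identify as the real work but only sketch.

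Two local slips should be fixed. First, the claim ``$p_{ij}^*\le 1-\alpha$ for all $i,j$'' does not follow from $x_1x_n\le\frac{1-\alpha}{\alpha}$: that bound controls only $x_1x_j$, not $x_{n-1}x_n$. The claim is also unnecessary, since your duality reduction already disposes of the upper bound; just delete it. Second, the case analysis for $m_+>0$ conflates ``zero escaping mass fraction'' with ``no escaping vertex'': $m_-=m_+=0$ does \emph{not} imply all $z_i$ are bounded (a single vertex, or any $o(n)$ set, can escape without registering in the mass fractions), so the branch ``$m_-=0$, hence $\mu\not\to 0$'' is not valid as written. The correct argument is the one you gesture at in the other branch, applied to the single vertex $1$: since $\mu\to 0$ forces $z_1\to-\infty$, the constraint $k_1^*\ge(n-1)\alpha$ together with $k_1^*\le|\{j:z_j\ge -z_1-T\}|+n\psi(e^{-T})$ forces a fraction at least $\alpha-o(1)$ of the $z_j$ to exceed $-z_1-T\to+\infty$, i.e.\ $m_+\ge\alpha>0$; the dual argument at any such escaping high vertex then gives $m_-\ge\alpha>0$, and the main inequality applies. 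With these repairs the argument is complete modulo the (routine but genuinely fiddly) truncation estimates.
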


\begin{proof}
The proof follows from \cite[Theorem 2.1]{BH}. In fact, by picking $\beta=1-\alpha$ in that theorem, 
we find that we need $\alpha>\tfrac14$. The theorem also gives information about the values of 
$\delta = \delta(\alpha)$ and $n_0=n_0(\alpha)$. 
\end{proof}


\subsection{Linking ensemble nonequivalence to the canonical covariances}
\label{S1.5}

In this section we investigate an important formula, recently put forward in~\cite{GS}, for the scaling 
of the relative entropy under a general constraint. The analysis in \cite{GS} allows for the possibility 
that not all the constraints (i.e., not all the components of the vector $\vec{C}$) are linearly independent. 
For instance, $\vec{C}$ may contain redundant replicas of the same constraint(s), or linear combinations 
of them. Since in the present paper we only consider the case where $\vec{C}$ is the degree sequence, 
the different components of $\vec{C}$ (i.e., the different degrees) are linearly independent.

When a $K$-dimensional constraint $\vec{C}^* = (C^*_i)_{i=1}^K$ with independent components is 
imposed, then a key result in~\cite{GS} is the formula
\be
S_n(\Pmic \mid \Pcan) \sim \log\frac{\sqrt{\det(2\pi Q)}}{T}, \qquad n\to\infty,
\ee
where
\begin{equation}
\label{covQ}
Q=(q_{ij})_{1 \leq i,j \leq K}
\end{equation}
is the $K\times K$ covariance matrix of the constraints under the canonical ensemble, whose entries are 
defined as
\be
\qquad q_{ij} = \mathrm{Cov}_{\Pcan}(C_i,C_j)=\langle C_i\,C_j\rangle-\langle C_i\rangle \langle C_j\rangle,
\ee
and
\be
T=\prod_{i=1}^K\left[1+O\left(1/\lambda_i^{(K)}(Q)\right)\right], 
\label{eq:T}
\ee
with $\lambda_i^{(K)}(Q)>0$ the $i$-th eigenvalue of the $K\times K$ covariance matrix $Q$. This result can 
be formulated rigorously as
\begin{formula}[\cite{GS}]
\label{conj}
If all the constraints are linearly independent, then the limiting relative entropy ${\alpha_n}$-density 
equals
\be
s_{\alpha_\infty}=\lim_{n\to\infty}\frac{\log\sqrt{\det(2\pi Q)}}{\alpha_n}+\tau_{\alpha_\infty}
\label{eq:deltalimit}
\ee
with $\alpha_n$ the `natural' speed and
\be
\tau_{\alpha_\infty}=-\lim_{n\to\infty}\frac{\log T}{\alpha_n}.
\label{eq:tau}
\ee
The latter is zero when 
\begin{equation}
\label{CondforFormula}
\lim_{n\to\infty} \frac{|I_{K_n,R}|}{\alpha_n}=0\quad \forall\,R<\infty,
\end{equation}
where $I_{K,R} = \lbrace i=1,\dots,K\colon\,\lambda_i^{(K)}(Q) \le R \rbrace$ with $\lambda_i^{(K)}(Q)$
the $i$-th eigenvalue of the $K$-dimensional covariance matrix $Q$ (the notation $K_n$ indicates that
$K$ may depend on $n$). Note that $0\le I_{K,R} \le K$. Consequently, \eqref{CondforFormula} is satisfied 
(and hence $\tau_{\alpha_\infty}=0$) when $\lim_{n\to\infty} K_n/\alpha_n=0$, i.e., when the number $K_n$ 
of constraints grows slower than $\alpha_n$.
\end{formula}

\begin{remark}[\cite{GS}]
{\rm Formula~{\rm \ref{conj}}, for which \cite{GS} offers compelling evidence but not a mathematical 
proof, can be rephrased by saying that the natural choice of $\alpha_n$
is
\be
\tilde{\alpha}_n=\log\sqrt{\det(2\pi Q)}.
\label{eq:alphatilde}
\ee
Indeed, if all the constraints are linearly independent and \eqref{CondforFormula} holds, 
then $\tau_{\tilde{\alpha}_n}=0$ and 
\begin{eqnarray}
\label{eq:deltalimit2}
&s_{\tilde{\alpha}_\infty}=1,\\
&S_n(\Pmic \mid \Pcan)=[1+o(1)]\,\tilde{\alpha}_n.
\label{conjecture2}
\end{eqnarray}
}
\end{remark}

We now present our main theorem, which considers the case where the constraint is on the 
degree sequence: $K_n=n$ and $\vC^*=\vec{k}^*= (k_i^*)_{i=1}^n$. This case was studied in \cite{GHR17}, 
for which $\alpha_n = n$ in the \emph{sparse regime with finite degrees}. Our results here 
focus on three new regimes, for which we need to increase $\alpha_n$: the \emph{sparse 
regime with growing degrees}, the \emph{$\delta$-tame regime}, and the \emph{ultra-dense 
regime with growing dual degrees}.
In all these cases, since $\lim_{n\to\infty} K_n/\alpha_n=\lim_{n\to\infty} n/\alpha_n=0$, {\rm Formula~{\rm \ref{conj}} states that~\eqref{eq:deltalimit} holds with $\tau_{\tilde{\alpha}_n}=0$.
Our theorem provides a rigorous and independent mathematical proof of this result.

\begin{theorem}
\label{MainTheorem}
Formula~{\rm \ref{conj}} is true with $\tau_{\alpha_\infty}=0$ when the constraint is on the 
degree sequence $\vC^*= \vec{k}^*= (k_i^*)_{i=1}^n$, the scale parameter is $\alpha_n 
= n\,\overline{f_n}$ with
\be
\label{barfndef}
\overline{f_n} = n^{-1} \sum_{i=1}^n f_n(k_i^*) \quad \text{ with } \quad 
f_n(k)=\frac{1}{2}\log\left[\frac{k(n-1-k)}{n}\right],
\ee 
and the degree sequence belongs to one of the following three regimes: 
\begin{itemize}
\item
The sparse regime with growing degrees:
\begin{equation}
\label{kmin}
\max_{1\leq i \leq n} k^*_i = o(\sqrt{n}\,),\qquad
\lim_{n\to\infty}\min_{1\leq i \leq n} k^*_i = \infty.
\end{equation}
\item
The $\delta$-tame regime (see \eqref{canonical} and Lemma~{\rm \ref{reverse}}):
\begin{equation}
\delta \leq p_{ij}^* \leq 1-\delta, \quad 1 \leq i\neq j \leq n.
\end{equation}
\item
The ultra-dense regime with growing dual degrees:
\begin{equation}
\max_{1\leq i \leq n}(n-1 - k^*_i) = o(\sqrt{n}\,),\qquad
\lim_{n\to\infty} \min_{1\leq i \leq n} (n-1-k^*_i) = \infty.
\end{equation}
\end{itemize}
In all three regimes there is breaking of ensemble equivalence, and  
\begin{equation}
\label{Result}
s_{{\alpha}_\infty}= \lim_{n \to \infty} s_{\alpha_n} = 1.
\end{equation}
\end{theorem}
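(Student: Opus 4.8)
The plan is to work directly from \eqref{interp}, which expresses the relative entropy as $S_n(\Pmic\mid\Pcan) = -\log[\Omega_{\vec{k}^*}\,\Pcan(G^{\vec{k}^*})] = -\log Q[\vec{k}^*](\vec{k}^*)$, i.e.\ minus the log of the canonical point mass of the degree sequence at its own mean. Under $\Pcan$ the edges are independent, so $\vec{k} = \sum_{1\le i<j\le n} g_{ij}(\vec{e}_i+\vec{e}_j)$ is a sum of independent $\{0,1\}$-valued vector contributions (a multivariate Poisson--Binomial law) with mean $\vec{k}^*$ and covariance $Q$ given by $q_{ii}=\sum_{j\ne i}p_{ij}^*(1-p_{ij}^*)$ and $q_{ij}=p_{ij}^*(1-p_{ij}^*)$ for $i\ne j$. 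The whole problem therefore reduces to a \emph{local limit evaluation} of this point mass, the target being the two asymptotics $S_n = \tfrac12\log\det(2\pi Q)\,(1+o(1))$ (which is Formula~\ref{conj} with $\tau_{\alpha_\infty}=0$) and $\tfrac12\log\det(2\pi Q)\sim\alpha_n$ (which identifies $\alpha_n$ as the natural scale and yields $s_{\alpha_\infty}=1$). I would dispose of the ultra-dense regime immediately through the duality $k_i^*\mapsto n-1-k_i^*$ of Appendix~\ref{appC}: it leaves $S_n$ invariant and, since $f_n(k)=f_n(n-1-k)$, also leaves $\alpha_n$ invariant, so the ultra-dense statement follows verbatim from the sparse one.

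For the $\delta$-tame regime I would invoke the Barvinok--Hartigan asymptotic enumeration of $\Omega_{\vec{k}^*}$ from~\cite{BH} (whose hypotheses are exactly $\delta$-tameness), which expresses $\Omega_{\vec{k}^*}$ through the maximum-entropy (canonical) measure together with a Gaussian correction $1/\sqrt{\det(2\pi Q)}$; combined with the explicit $\Pcan(G^{\vec{k}^*})$ from \eqref{px}, and after checking that its error is $o(\alpha_n)$, this gives $S_n=\tfrac12\log\det(2\pi Q)+o(\alpha_n)$ (a bounded lattice factor from the parity of $\sum_i k_i$ being absorbed into the $o$). It then remains to estimate the determinant. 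Writing $Q=\Lambda+V$ with $\Lambda=\mathrm{diag}(q_{ii})$, the identity $x^{\top}Qx=\sum_{i<j}p_{ij}^*(1-p_{ij}^*)(x_i+x_j)^2\ge\delta(1-\delta)(n-2)\,\lVert x\rVert^2$ gives $\lambda_{\min}(Q)\gtrsim_\delta n$, while $\max_i q_{ii}=O(n)$ and $\mathrm{tr}\big((\Lambda^{-1/2}V\Lambda^{-1/2})^2\big)=\sum_{i\ne j}q_{ij}^2/(q_{ii}q_{jj})=O(1)$; together with $\mathrm{tr}(\Lambda^{-1/2}V\Lambda^{-1/2})=0$ these keep the eigenvalues of $\Lambda^{-1/2}Q\Lambda^{-1/2}$ away from $0$ and yield $\log\det Q=\sum_i\log q_{ii}+O(1)$. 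Since $q_{ii}=\tfrac{k_i^*(n-1-k_i^*)}{n-1}(1-c_i)$ with $4\delta(1-\delta)\le 1-c_i\le 1$, one has $\tfrac12\log q_{ii}=f_n(k_i^*)+O_\delta(1)$, whence $\tfrac12\log\det(2\pi Q)=\alpha_n+O_\delta(n)=\alpha_n(1+o(1))$ because $\alpha_n\sim\tfrac{n}{2}\log n$ dominates every $O(n)$ term.

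In the sparse regime with growing degrees the off-diagonal entries are negligible ($q_{ij}\approx p_{ij}^*$ is small and $q_{ii}=k_i^*-\sum_{j\ne i}(p_{ij}^*)^2\approx k_i^*$), and the degrees become asymptotically independent and Poisson with parameters $k_i^*$. I would establish the corresponding multivariate Poisson local limit and then read off, via Stirling applied to $\PP(\mathrm{Poi}(k_i^*)=k_i^*)\sim(2\pi k_i^*)^{-1/2}$, that $S_n=\tfrac12\sum_{i=1}^n\log(2\pi k_i^*)+o(\alpha_n)=\tfrac12\sum_{i=1}^n\log k_i^*+O(n)+o(\alpha_n)$. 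On the other hand $\max_i k_i^*=o(\sqrt n)$ forces $\tfrac12\log\tfrac{n-1-k_i^*}{n}=o(1)$ uniformly, so $\alpha_n=\tfrac12\sum_i\log k_i^*+o(n)$; and because $\min_i k_i^*\to\infty$ we have $\alpha_n\ge\tfrac{n}{2}\log(\min_i k_i^*)\gg n$, so all $O(n)$ corrections are $o(\alpha_n)$ and $S_n\sim\alpha_n$, i.e.\ $s_{\alpha_\infty}=1$. This is precisely where the hypothesis of \emph{growing} degrees enters: for bounded degrees $\alpha_n\asymp n$ and the additive constants survive, producing the nonzero correction $\tau_{\alpha_\infty}\ne0$ of the finite-degree case treated in~\cite{GHR17}.

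I expect the genuine obstacle to be the local limit step rather than the surrounding algebra. In the $\delta$-tame regime this means importing the enumeration asymptotics of~\cite{BH} and verifying that their multiplicative error is $o(\alpha_n)$ on the logarithmic scale; in the sparse regime it means proving the multivariate Poisson local limit while controlling both the inter-degree dependence created by shared edges and the non-Poissonian corrections to $\PP(\mathrm{Poi}(k_i^*)=k_i^*)$, uniformly enough that their total is $o(\alpha_n)$. The determinant estimate, in particular the lower bound $\lambda_{\min}(Q)\gtrsim_\delta n$ that keeps $\log\det(I+\Lambda^{-1/2}V\Lambda^{-1/2})$ bounded, is the principal secondary point. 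Finally, $\tau_{\alpha_\infty}=0$ is consistent with the criterion \eqref{CondforFormula} of Formula~\ref{conj}: in all three regimes the eigenvalues of $Q$ diverge (they are $\asymp n$ when $\delta$-tame and $\approx k_i^*\to\infty$ when sparse), so $|I_{K_n,R}|=0$ eventually for each fixed $R$ and $|I_{K_n,R}|/\alpha_n\to0$.
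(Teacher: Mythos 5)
Your proposal is correct and follows the same global architecture as the paper's proof: the ultra-dense case is reduced to the sparse case by the complementation duality of Appendix~\ref{appC} (including the observation that $f_n(k)=f_n(n-1-k)$ leaves $\alpha_n$ invariant), the $\delta$-tame case rests on the Barvinok--Hartigan enumeration of $\Omega_{\vec{k}^*}$ exactly as in Lemma~\ref{lemma2}, and the sparse case rests on Poisson asymptotics for the degrees. Two points of genuine divergence are worth recording. First, your determinant estimate replaces the paper's Lemma~\ref{lemma3}, which invokes the Ipsen--Lee determinant-approximation theorem \cite{IL03} together with Zhang's bound \cite{Z04} on the smallest eigenvalue of the reversible Markov matrix $A=Q_D^{-1}Q_{\mathrm{off}}$, by a self-contained argument: the exact identity $x^{\top}Qx=\sum_{i<j}p_{ij}^*(1-p_{ij}^*)(x_i+x_j)^2\ge\delta(1-\delta)(n-2)\lVert x\rVert^2$ (which follows from $\sum_{i<j}(x_i+x_j)^2=(n-2)\lVert x\rVert^2+(\sum_i x_i)^2$) keeps the eigenvalues of $\Lambda^{-1/2}Q\Lambda^{-1/2}=I+M$ bounded away from $0$, and $\mathrm{tr}(M)=0$, $\mathrm{tr}(M^2)=O(1)$ then give $\lvert\log\det(I+M)\rvert=O(1)$. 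This is cleaner, avoids both external citations, and yields the sharper error $\log\det Q=\sum_i\log q_{ii}+O(1)$ where the paper only needs and only proves $o(n\,\overline{f}_n)$; your identification $\tfrac12\log q_{ii}=f_n(k_i^*)+O_\delta(1)$ is also a slightly finer statement than the paper's two-sided bound \eqref{qdnf2}, and both suffice since $\alpha_n\sim\tfrac{n}{2}\log n$ here. Second, in the sparse regime the paper does not prove a multivariate Poisson local limit theorem at all: it imports $S_n(\Pmic\mid\Pcan)=\sum_{i=1}^n g(k_i^*)+o(n)$ wholesale from \cite[Eq.~(3.12)]{GHR17} and only then applies Stirling. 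You correctly identify this local limit as the real obstacle but leave it as a plan, so as written your sparse case is incomplete as a standalone proof --- though the required input is precisely the cited identity, and the error you need, $o(\alpha_n)$, is weaker than the $o(n)$ that \cite{GHR17} supplies (since $\min_i k_i^*\to\infty$ forces $\alpha_n/n\to\infty$). The remaining algebra in all three regimes coincides with the paper's.
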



\subsection{Discussion and outline}
\label{S1.6}

Comparing \eqref{eq:deltalimit2} and~\eqref{Result}, and using~\eqref{eq:alphatilde}, 
we see that Theorem~\ref{MainTheorem} shows that if the constraint is on the degree 
sequence, then
\be
\label{rescaling}
S_n(\Pmic \mid \Pcan) \sim n \overline{f_n}\sim\log\sqrt{\det(2\pi Q)}
\ee 
in each of the three regimes considered. Below we provide a heuristic explanation for this 
result (as well as for our previous results in~\cite{GHR17}) that links back to \eqref{interp}. 
In Section \ref{S2} we prove Theorem \ref{MainTheorem}.

\paragraph{Poisson-Binomial degrees in the general case.}
Note that~\eqref{interp} can be rewritten as
\begin{equation}
S_n(\Pmic \mid \Pcan) = S\big(\,\delta[\vec{k^*}] \mid Q[\vec{k^*}]\,\big),
\label{SQ}
\end{equation} 
where $\delta[\vec{k^*}] = \prod_{i=1}^n \delta[k^*_i]$ is the \emph{multivariate Dirac distribution} 
with average $\vec{k^*}$. This has the interesting interpretation that the relative entropy between 
the distributions $\Pmic$ and $\Pcan$ \emph{on the set of graphs} coincides with the relative 
entropy between $\delta[\vec{k^*}]$ and $Q[\vec{k^*}]$ \emph{on the set of degree sequences}.

To be explicit, using \eqref{eq:QOmega} and~\eqref{px}, we can rewrite $Q[\vec{k^*}](\vec{k})$ 
as 
\begin{equation}
Q[\vec{k^*}](\vec{k}) =\Omega_{\vec{k}}\ \prod_{i=1}^n (x_i^*)^{k_i}
\prod_{1\le i<j \le n}(1+x_i^* x_j^*)^{-1}.
\label{eq:PoissonBinomial}
\end{equation}
We note that the above distribution is a multivariate version of the \emph{Poisson-Binomial 
distribution} (or Poisson's Binomial distribution; see Wang~\cite{W93}). In the univariate case, 
the Poisson-Binomial distribution describes the probability of a certain number of successes 
out of a total number of independent and (in general) not identical Bernoulli trials~\cite{W93}. 
In our case, the marginal probability that node $i$ has degree $k_i$ in the canonical ensemble, 
irrespectively of the degree of any other node, is indeed a univariate Poisson-Binomial given 
by $n-1$ independent Bernoulli trials with success probabilities $\{p_{ij}^*\}_{j\ne i}$. 
The relation in ~\eqref{SQ} can therefore be restated as
\begin{equation}
S_n(\Pmic \mid \Pcan) = S\big(\,\delta[\vec{k^*}] 
\mid \mathrm{PoissonBinomial}[\vec{k^*}]\,\big), 
\end{equation}
where $\mathrm{PoissonBinomial}[\vec{k^*}]$ is the multivariate Poisson-Binomial distribution 
given by ~\eqref{eq:PoissonBinomial}, i.e.,
\be
Q[\vec{k^*}] = \mathrm{PoissonBinomial}[\vec{k^*}].
\ee
The relative entropy can therefore be seen as coming from a situation in which the microcanonical 
ensemble forces the degree sequence to be exactly $\vec{k^*}$, while the canonical ensemble 
forces the degree sequence to be Poisson-Binomial distributed with average $\vec{k^*}$.

It is known that the univariate Poisson-Binomial distribution admits two asymptotic limits: (1) a 
Poisson limit (if and only if, in our notation, $\sum_{j\ne i}p_{ij}^*\to\lambda>0$ and $\sum_{j\ne i}
(p_{ij}^*)^2\to0$ as $n\to\infty$~\cite{W93}); (2) a Gaussian limit (if and only if $p_{ij}^*\to\lambda_j>0$ 
for all $j\ne i$ as $n\to\infty$, as follows from a central limit theorem type of argument). If all 
the Bernoulli trials are identical, i.e., if all the probabilities $\{p_{ij}^*\}_{j\ne i}$ are equal, then 
the univariate Poisson-Binomial distribution reduces to the ordinary Binomial distribution, which 
also exhibits the well-known Poisson and Gaussian limits. These results imply that also the 
general multivariate Poisson-Binomial distribution in \eqref{eq:PoissonBinomial} admits limiting 
behaviours that should be consistent with the Poisson and Gaussian limits discussed above for
its marginals. This is precisely what we confirm below.

\paragraph{Poisson degrees in the sparse regime.}
In \cite{GHR17} it was shown that, for a sparse degree sequence,
\be
\label{interpretation1}
S_n(\Pmic \mid \Pcan) \sim \sum_{i=1}^n S\big(\,\delta[k^*_i] 
\mid \mathrm{Poisson}[k^*_i]\,\big). 
\ee
The right-hand side is the sum over all nodes $i$ of the relative entropy of the \emph{Dirac 
distribution} with average $k^*_i$ w.r.t.\ the \emph{Poisson distribution} with average $k^*_i$. 
We see that, under the sparseness condition, the constraints act on the nodes essentially 
independently. 
We can therefore reinterpret \eqref{interpretation1} as the statement
\begin{equation}
\label{interpretation2}
S_n(\Pmic \mid \Pcan) \sim S\big(\,\delta[\vec{k^*}] 
\mid \mathrm{Poisson}[\vec{k^*}]\,\big), 
\end{equation}
where $\mathrm{Poisson}[\vec{k^*}]$ $= \prod_{i=1}^n \mathrm{Poisson}[k^*_i]$ is the 
\emph{multivariate Poisson distribution} with average $\vec{k^*}$. In other words, in this regime
\be
\label{Psp}
Q[\vec{k^*}] \sim \mathrm{Poisson}[\vec{k^*}],
\ee
i.e. the joint multivariate Poisson-Binomial distribution~\eqref{eq:PoissonBinomial} essentially 
decouples into the product of marginal univariate Poisson-Binomial distributions describing 
the degrees of all nodes, and each of these Poisson-Binomial distributions is asymptotically 
a Poisson distribution.

Note that the Poisson regime was obtained in~\cite{GHR17} under the condition in~\eqref{eq:sparse1}, 
which is less restrictive than the aforementioned condition $k_i^*=\sum_{j\ne i}p_{ij}^*\to\lambda>0$, 
$\sum_{j\ne i}(p_{ij}^*)^2\to0$ under which the Poisson distribution is retrieved from the Poisson-Binomial 
distribution~\cite{W93}. In particular, the condition in~\eqref{eq:sparse1} includes both the case 
with growing degrees included in Theorem~\ref{MainTheorem} (and consistent with Formula~{\rm \ref{conj}} 
with $\tau_{\alpha_\infty}=0$) and the case with finite degrees, which \emph{cannot} be retrieved from 
Formula~{\rm \ref{conj}} with $\tau_{\alpha_\infty}=0$, because it corresponds to the case where all the $n=\alpha_n$ eigenvalues of 
$Q$ remain finite as $n$ diverges (as the entries of $Q$ themselves do not diverge), and indeed~\eqref{CondforFormula} does not hold.

\paragraph{Poisson degrees in the ultra-dense regime.}
Since the ultra-dense regime is the dual of the sparse regime, we immediately get the heuristic 
interpretation of the relative entropy when the constraint is on an ultra-dense degree sequence 
$\vec{k}^*$. Using \eqref{interpretation2} and the observations in Appendix \ref{appC} (see, in
particular \eqref{Ultrarelativeentropy}), we get
\begin{equation}
\label{interpretation22}
S_n(\Pmic \mid \Pcan) \sim S\big(\,\delta[\vec{\ell^*}] \mid \mathrm{Poisson}[\vec{\ell^*}]\,\big), 
\end{equation}
where $\vec{\ell}^*= (\ell_i^*)_{i=1}^n$ is the dual degree sequence given by $\ell_i^* = n-1-k_i^*$.
In other words, under the microcanonical ensemble the dual degrees follow the distribution 
$\delta[\vec{\ell^*}]$, while under the canonical ensemble the dual degrees follow the distribution 
$Q[\vec{\ell^*}]$, where in analogy with \eqref{Psp},
\be
\label{Pud}
Q[\vec{\ell^*}] \sim \mathrm{Poisson}[\vec{\ell^*}].
\ee
Similar to the sparse case, the multivariate Poisson-Binomial distribution~\eqref{eq:PoissonBinomial} 
reduces to a product of marginal, and asymptotically Poisson, distributions governing the different 
degrees.

Again, the case with finite dual degrees cannot be retrieved from Formula~{\rm \ref{conj}} with 
$\tau_{\alpha_\infty}=0$, because it corresponds to the case where $Q$ has a diverging 
(like $n=\alpha_n$) number of eigenvalues whose value remains finite as $n\to\infty$, and~\eqref{CondforFormula} does not hold. By contrast, the case with growing 
dual degrees can be retrieved from Formula~{\rm \ref{conj}} with $\tau_{\alpha_\infty}=0$ because~\eqref{CondforFormula} holds, as 
confirmed in Theorem~\ref{MainTheorem}.

\paragraph{Gaussian degrees in the dense regime.}
We can reinterpet \eqref{rescaling} as the statement
\be
\label{extra} 
S_n(\Pmic \mid \Pcan) \sim S\big(\,\delta[\vec{k^*}] \mid \mathrm{Normal}[\vec{k^*},Q]\,\big),
\ee
where $\mathrm{Normal}[\vec{k^*},Q]$ is the \emph{multivariate Normal distribution} with 
mean $\vec{k^*}$ and covariance matrix $Q$. In other words, in this regime
\be 
Q[\vec{k^*}] \sim \mathrm{Normal}[\vec{k^*},Q],
\ee
i.e., the multivariate Poisson-Binomial distribution~\eqref{eq:PoissonBinomial} is asymptotically 
a multivariate Gaussian distribution whose covariance matrix is in general not diagonal, i.e., the 
dependencies between degrees of different nodes do \emph{not} vanish, unlike in the other two 
regimes. Since all the degrees are growing in this regime, so are all the eigenvalues of $Q$, implying~\eqref{CondforFormula} and
consistently with {\rm Formula~{\rm \ref{conj}} with $\tau_{\alpha_\infty}=0$, as proven 
in Theorem~\ref{MainTheorem}. 

Note that the right-hand side of \eqref{extra}, being the relative entropy of a discrete distribution 
with respect to a continuous distribution, needs to be properly interpreted: the Dirac distribution
$\delta[\vec{k^*}]$ needs to be smoothened to a continuous distribution with support in a small
ball around $\vec{k^*}$. Since the degrees are large, this does not affect the asymptotics.

\paragraph{Crossover between the regimes.}
An easy computation gives 
\be
\label{gkdef}
S\big(\,\delta[k^*_i] \mid \mathrm{Poisson}[k^*_i]\,\big) = g(k^*_i)
\quad \text{ with } \quad g(k) = \log\left(\frac{k!}{e^{-k}k^k}\right).
\ee
Since $g(k) = [1+o(1)]\tfrac12 \log (2\pi k)$, $k\to\infty$, we see that, as we move from the sparse 
regime with finite degrees to the sparse regime with growing degrees, the scaling of the relative 
entropy in~\eqref{interpretation1} nicely links up with that of the dense regime in~\eqref{extra} 
via the common expression in~\eqref{rescaling}. Note, however, that since the sparse regime 
with growing degrees is in general incompatible with the dense $\delta$-tame regime, in 
Theorem~\ref{MainTheorem} we have to obtain the two scalings of the relative entropy under 
disjoint assumptions. By contrast, Formula~{\rm \ref{conj}} with $\tau_{\alpha_\infty}=0$, and 
hence \eqref{conjecture2}, unifies the two cases under the simpler and more general requirement 
that all the eigenvalues of $Q$, and hence all the degrees, diverge. Actually, \eqref{conjecture2} 
is expected to hold in the even more general hybrid case where there are both finite and growing degrees, 
provided the number of finite-valued eigenvalues of $Q$ grows slower than $\alpha_n$~\cite{GS}.

\paragraph{Other constraints.}
It would be interesting to investigate Formula~\ref{conj} for constraints other than on the 
degrees. Such constraints are typically much harder to analyse. In \cite{dHMRS17} constraints
are considered on the total number of edges and the total number of triangles \emph{simultaneously} ($K=2$)
in the dense regime. It was found that, with $\alpha_n=n^2$, breaking of ensemble equivalence 
occurs for some `frustrated' choices of these numbers. 
Clearly, this type of breaking of ensemble equivalence does not arise from the recently proposed~\cite{GS} mechanism associated with a diverging number of constraints as in the cases considered in this paper, but from the more traditional~\cite{T15} mechanism of a phase transition associated with the frustration phenomenon.

\paragraph{Outline.}
Theorem~\ref{MainTheorem} is proved in Section~\ref{S2}. In Appendix~\ref{appB} we show that 
the canonical probabilities in \eqref{canonical} are the same as the probabilities used in \cite{BH} 
to define a $\delta$-tame degree sequence. In Appendix~\ref{appC} we explain the duality 
between the sparse regime and the ultra-dense regime.


\section{Proof of the main theorem}
\label{S2}

In Section~\ref{S2.2} we prove Theorem~\ref{MainTheorem}. The proof is based on two lemmas,
which we state and prove in Section~\ref{S2.1}.


\subsection{Preparatory lemmas}
\label{S2.1}

The following lemma gives an expression for the relative entropy. 

\begin{lemma}
\label{lemma2}
If the constraint is a $\delta$-tame degree sequence, then the relative entropy in \eqref{eq:KL2} 
scales as
\begin{equation}
\label{RelEntro}
S_n(\Pmic \mid \Pcan) = [1+o(1)]\,\tfrac12\log[\det(2\pi Q)],
\end{equation}
where $Q$ is the covariance matrix in \eqref{covQ}. This matrix $Q=(q_{ij})$ takes the form 
\begin{equation}
\label{Q}
\begin{cases}
q_{ii} = k_i^*-\sum_{j \neq i}(p_{ij}^*)^2 
= \sum_{j \neq i} p_{ij}^*(1-p_{ij}^*), \quad 1 \leq i \leq n,\\
q_{ij} = p_{ij}^*(1-p_{ij}^*), \quad 1 \leq i \neq j \leq n.
\end{cases}
\end{equation}
\end{lemma}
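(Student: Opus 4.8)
The plan is to treat the two assertions separately: first compute the entries of the covariance matrix $Q$, and then establish the scaling \eqref{RelEntro}. The entries of $Q$ come straight from the product form \eqref{canonical} of $\Pcan$, which makes the edge indicators $\{g_{ij}(G)\}_{1\le i<j\le n}$ independent Bernoulli variables with mean $p_{ij}^*$. Writing $k_i=\sum_{j\neq i}g_{ij}$, independence gives $q_{ii}=\mathrm{Var}(k_i)=\sum_{j\neq i}p_{ij}^*(1-p_{ij}^*)$, and the alternative expression for $q_{ii}$ follows from the tuning condition \eqref{canonical2}, $k_i^*=\sum_{j\neq i}p_{ij}^*$. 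For $i\neq j$ the sums defining $k_i$ and $k_j$ share exactly one common variable, namely $g_{ij}=g_{ji}$, so $q_{ij}=\mathrm{Cov}(k_i,k_j)=\mathrm{Var}(g_{ij})=p_{ij}^*(1-p_{ij}^*)$. This is \eqref{Q}.

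For the scaling I would start from \eqref{interp}, which writes $S_n(\Pmic\mid\Pcan)=-\log[\Omega_{\vec{k}^*}\,\Pcan(G^*)]=-\log Q[\vec{k}^*](\vec{k}^*)$, i.e. minus the log of the probability that the canonical degree sequence hits its own mean $\vec{k}^*$. The factor $\Pcan(G^*)$ is explicit: since $\vec{C}(G^*)=\vec{k}^*=\langle\vec{C}\rangle$, one has $-\log\Pcan(G^*)=\vec{\theta}^*\cdot\vec{k}^*+\log Z(\vec{\theta}^*)=S_n(\Pcan)$, the canonical (maximum) entropy, so that
\[
S_n(\Pmic\mid\Pcan)=S_n(\Pcan)-\log\Omega_{\vec{k}^*}.
\]
The heart of the matter is therefore the asymptotic enumeration of $\Omega_{\vec{k}^*}$ in the $\delta$-tame regime, and here I would invoke the local central limit theorem of \cite{BH}. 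Since $\vec{k}$ is a sum of the independent Bernoulli edge variables with mean $\vec{k}^*$ and covariance $Q$, and since $\delta$-tameness keeps every $p_{ij}^*$ in $[\delta,1-\delta]$, the matrix $Q$ is well-conditioned: a signless-Laplacian bound gives $\lambda_{\min}(Q)\ge\delta(1-\delta)(n-2)$ while $\lambda_{\max}(Q)=O(n)$, so all $n$ eigenvalues are of order $n$. The multivariate local limit theorem then yields
\[
Q[\vec{k}^*](\vec{k}^*)=[1+o(1)]\,\frac{c_n}{\sqrt{\det(2\pi Q)}},
\]
where $c_n$ is a bounded lattice-correction factor equal to the covolume of the lattice generated by the degree increments $\{e_i+e_j\}_{1\le i<j\le n}$ (equal to $2$ for $n\ge3$); note that $\sum_i k_i^*$ is even because $\vec{k}^*$ is graphical, so $\vec{k}^*$ lies in the correct coset and the theorem applies.

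Finally I would absorb the corrections. Combining the two displays gives $S_n(\Pmic\mid\Pcan)=\tfrac12\log\det(2\pi Q)-\log c_n+o(1)$. Because the eigenvalue bounds force $\tfrac12\log\det(2\pi Q)=\tfrac12\sum_i\log(2\pi\lambda_i)$ to diverge at rate $\tfrac12\,n\log n$, the bounded term $\log c_n$ and the $o(1)$ remainder are negligible, which gives \eqref{RelEntro}. I expect the genuine obstacle to be the local limit theorem in the \emph{growing} dimension $n$: the classical local CLT is a fixed-dimension statement, and establishing it here requires uniform control of the characteristic function of $\vec{k}$ away from the origin of the torus $[-\pi,\pi]^n$ — exactly the step where $\delta$-tameness, via the conditioning of $Q$ and the estimates of \cite{BH}, is indispensable.
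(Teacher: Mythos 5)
Your proposal is correct and follows essentially the same route as the paper: the covariance entries come from the independent-Bernoulli structure of the canonical ensemble (the paper obtains them equivalently as the Hessian of the log-likelihood), and the scaling \eqref{RelEntro} is deduced from the Barvinok--Hartigan enumeration of graphs with a $\delta$-tame degree sequence combined with the identity $-\log\Pcan(G^*)=H(p^*)=S_n(\Pcan)$. One caveat: \cite[Eq.~(1.4.1)]{BH} controls the point probability $Q[\vec{k^*}](\vec{k^*})$ only up to $\delta$-dependent multiplicative constants $\gamma_1(\delta)\le e^C\le\gamma_2(\delta)$, not the $[1+o(1)]$ factor your local-CLT phrasing asserts; but this weaker control is all that is needed, since the correction is $O(1)$ against a leading term of order $n\log n$, exactly as in your final absorption step.
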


\begin{proof}
To compute $q_{ij}=\mathrm{Cov}_{\Pcan}(k_i,k_j)$ we take the second order derivatives of 
the log-likelihood function
\be
{\cal{L}}(\vec{\theta}) = \log\Pcan(G^* \mid \vec{\theta})
= \log\left[ \prod_{1 \leq i < j \leq n} p_{ij}^{g_{ij}(G^*)} (1-p_{ij})^{(1-g_{ij}(G^*))} \right],
\quad p_{ij} = \frac{e^{-\theta_i - \theta_j}}{1+e^{-\theta_i - \theta_j}}
\ee
in the point $\vec{\theta}=\vec{\theta}^*$~\cite{GS}. 
Indeed, it is easy to show that the first-order derivatives are~\cite{GL08}
\be
\frac{\partial}{\partial \theta_i}{\cal{L}}(\vec{\theta}\,)
= \langle k_i\rangle - k_i^*,
\quad 
\frac{\partial}{\partial \theta_i}{\cal{L}}(\vec{\theta}\,)\bigg|_{\vec{\theta}=\vec{\theta^*}}
= k_i^*-k_i^*=0
\ee
and the second-order derivatives are
\begin{equation}
\label{Covcalc}
\frac{\partial^2}{\partial \theta_i\partial \theta_j}{\cal{L}}(\vec{\theta})
\bigg|_{\vec{\theta}=\vec{\theta^*}} =\langle k_i\,k_j\rangle - \langle k_i\rangle\langle k_j\rangle
= \mathrm{Cov}_{\Pcan}(k_i,k_j).
\end{equation}
This readily gives~\eqref{Q}.

The proof of \eqref{RelEntro} uses \cite[Eq.~(1.4.1)]{BH}, which says that if a $\delta$-tame degree 
sequence is used as constraint, then
\begin{equation}
\label{relentlemma}
\Pmic^{-1}(G^*) = \Omega_{\vec{C}^*} = \frac{e^{H(p^*)}}{(2\pi)^{n/2}\sqrt{\det(Q)}}\ e^{C},
\end{equation}
where $Q$ and $p^*$ are defined in \eqref{Q} and \eqref{pbm} below, while $e^C$ is sandwiched 
between two constants that depend on $\delta$:
\begin{equation}
\gamma_1(\delta) \leq e^C \le \gamma_2(\delta).
\end{equation} 
From \eqref{relentlemma} and the relation $H(p^*) = -\log\Pcan(G^*)$, proved in Lemma~\ref{lemma1}
below, we get the claim.
\end{proof}

The following lemma shows that the diagonal approximation of $\log(\det Q)/n\overline{f}_n$ is good 
when the degree sequence is $\delta$-tame.

\begin{lemma}
\label{lemma3}
Under the $\delta$-tame condition,
\begin{equation}
\label{QQDcomp}
\log(\det Q_D) + o(n\,\overline{f}_n) \leq \log(\det Q) \leq \log(\det Q_D)
\end{equation}
with $Q_D=\mathrm{diag}(Q)$ the matrix that coincides with $Q$ on the diagonal 
and is zero off the diagonal.
\end{lemma}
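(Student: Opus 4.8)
The plan is to prove the two inequalities separately; the right-hand inequality is immediate and the left-hand one carries all the content. For the upper bound, observe that $Q$ is a covariance matrix, hence positive semidefinite, and (as the spectral bound below will show) positive definite under the $\delta$-tame condition. Hadamard's inequality for positive definite matrices then gives $\det Q \le \prod_{i=1}^n q_{ii} = \det Q_D$, and taking logarithms yields the right-hand inequality $\log(\det Q) \le \log(\det Q_D)$.

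For the lower bound I would first record the algebraic structure of $Q$. Writing $w_{ij} := p_{ij}^*(1-p_{ij}^*)$ and $d_i := q_{ii} = \sum_{j \ne i} w_{ij}$, the matrix in \eqref{Q} is a weighted signless Laplacian, so that for every vector $v$
\[
v^{\top} Q\, v = \sum_{1 \le i<j \le n} w_{ij}\,(v_i+v_j)^2 .
\]
Factoring out the diagonal via $A := Q_D^{-1/2}(Q-Q_D)Q_D^{-1/2}$, one has $Q_D^{-1/2} Q Q_D^{-1/2} = I + A$ and therefore
\[
\log(\det Q) - \log(\det Q_D) = \log\det(I+A) = \sum_{i=1}^n \log \lambda_i(I+A),
\]
where $\lambda_i(I+A) > 0$ are the eigenvalues of $I+A$. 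Since each eigenvalue is at least $\lambda_{\min}(I+A)$, the crude bound $\sum_{i} \log \lambda_i(I+A) \ge n \log \lambda_{\min}(I+A)$ reduces the problem to bounding $\lambda_{\min}(I+A)$ from below by a positive constant depending only on $\delta$.

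This spectral bound is the heart of the argument. Using $u = Q_D^{1/2} v$ in the Rayleigh quotient gives $\lambda_{\min}(I+A) = \min_{v \ne 0} v^{\top} Q v / \sum_i d_i v_i^2$. In the numerator I would combine the quadratic-form identity with the $\delta$-tame lower bound $w_{ij} \ge \delta(1-\delta)$ and the elementary identity $\sum_{i<j}(v_i+v_j)^2 = (n-2)\|v\|^2 + (\sum_i v_i)^2 \ge (n-2)\|v\|^2$; in the denominator I would use $d_i \le (n-1)/4$, which follows from $p_{ij}^*(1-p_{ij}^*) \le \tfrac14$. The factors $\|v\|^2$ cancel and yield
\[
\lambda_{\min}(I+A) \ge \frac{4\delta(1-\delta)(n-2)}{n-1} \ge 2\delta(1-\delta) =: c_2 > 0
\]
for all $n \ge 3$. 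Consequently $\log(\det Q) - \log(\det Q_D) \ge n \log c_2 = -O(n)$. Since the $\delta$-tame bounds force $k_i^*(n-1-k_i^*)/n = \Theta(n)$, we have $\overline{f}_n = \Theta(\log n)$ and hence $n\,\overline{f}_n = \Theta(n \log n)$; thus $-O(n) = -o(n\,\overline{f}_n)$, which is precisely the left-hand inequality.

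The step I expect to be the main obstacle is the lower bound on $\lambda_{\min}(I+A)$: one must recognise that $Q$ is the signless Laplacian of the complete weighted graph and then exploit both that every off-diagonal weight is bounded away from zero (morally, the graph is far from bipartite) and that all $\binom{n}{2}$ pairs are present. Everything else — Hadamard, the determinant factorisation, and the final order-of-magnitude comparison — is routine; in particular, the crudeness of the bound $\sum_i \log \lambda_i(I+A) \ge n \log \lambda_{\min}(I+A)$ costs only a factor $O(n)$, which is negligible on the scale $n\,\overline{f}_n = \Theta(n \log n)$.
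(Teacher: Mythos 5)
Your proposal is correct, and it reaches the conclusion by a genuinely different and more self-contained route than the paper. The paper's proof invokes two external results: a determinant-approximation theorem of Ipsen and Lee, which gives $e^{-n\rho^2(A)/(1+\lambda_{\min}(A))}\det Q_D \le \det Q \le \det Q_D$ for $A=Q_D^{-1}Q_{\mathrm{off}}$, and a theorem of Zhang on the smallest eigenvalue of reversible Markov chains, used to show $1+\lambda_{\min}(A)\ge \frac{n-2}{n-1}\bigl(\frac{\delta}{1-\delta}\bigr)^2$ after observing that $A$ is a stochastic matrix with entries of order $1/n$ bounded below. Your argument replaces both citations with elementary linear algebra: Hadamard's inequality for the upper bound, and for the lower bound the factorisation $\log\det Q-\log\det Q_D=\log\det(I+\tilde A)\ge n\log\lambda_{\min}(I+\tilde A)$ with $\tilde A=Q_D^{-1/2}(Q-Q_D)Q_D^{-1/2}$, which is similar to the paper's $A$ and hence has the same spectrum. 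Your key spectral bound is therefore the same fact the paper extracts from Zhang's theorem, but you derive it directly from the signless-Laplacian identity $v^{\top}Qv=\sum_{i<j}w_{ij}(v_i+v_j)^2$ together with $\sum_{i<j}(v_i+v_j)^2\ge(n-2)\|v\|^2$ and the two-sided $\delta$-tame bounds on $w_{ij}$ (your constant $\delta(1-\delta)$ is even slightly better than the paper's $\delta^2$); all steps check out, including the final comparison $O(n)=o(n\overline{f}_n)$ since $\overline{f}_n=\Theta(\log n)$ under $\delta$-tameness. What the paper's route buys is a sharper multiplicative error of the form $e^{-n\rho^2(A)/(1+\lambda_{\min}(A))}$, but since $\rho(A)=1$ there this is still only an $O(n)$ correction, so nothing is lost by your cruder bound $n\log\lambda_{\min}(I+\tilde A)$ on the scale $n\overline{f}_n$; what your route buys is a proof that needs no black-box theorems and makes transparent why the off-diagonal contribution is negligible.
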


\begin{proof}
We use \cite[Theorem 2.3]{IL03}, which says that if
\begin{itemize}
\item[(1)] $\det(Q)$ is real, 
\item[(2)] $Q_D$ is non-singular with $\det(Q_D)$ real,
\item[(3)] $\lambda_i (A)>-1$, $1 \leq i \leq n$,
\end{itemize}
then 
\begin{equation}
\label{ILbds}
e^{-\frac{n\rho^2(A)}{1+\lambda_{\min}(A)}} \det Q_D \leq \det Q \leq \det Q_D.
\end{equation}
Here, $A=Q_D^{-1}Q_{\mathrm{off}}$, with $Q_{\mathrm{off}}$ the matrix that coincides with 
$Q$ off the diagonal and is zero on the diagonal, $\lambda_i(A)$ is the $i$-th eigenvalue of $A$
(arranged in decreasing order), $\lambda_{\mathrm{min}}(A) = \min _{1 \leq i \leq n}\lambda_i(A)$, 
and $\rho(A) = \max_{1 \leq i \leq n}|\lambda_i(A)|$.

We begin by verifying (1)--(3). 

\medskip\noindent
(1) Since $Q$ is a symmetric matrix with real entries, $\det Q$ exists and is real.

\medskip\noindent
(2) This property holds thanks to the $\delta$-tame condition. Indeed, since $q_{ij} = p_{i,j}^*(1-p_{i,j}^*)$, 
we have 
\begin{equation}
\label{qij}
0 < \delta^{2} \leq q_{ij} \leq (1-\delta)^{2} < 1,
\end{equation}
which implies that
\begin{equation}
\label{qii}
0 < (n-1)\delta^2 \leq q_{ii} = \sum_{j\neq i} q_{ij} \leq (n-1)(1-\delta)^2.  
\end{equation}

\medskip\noindent
(3) It is easy to show that $A=(a_{ij})$ is given by
\begin{equation}
\label{aijrel}
a_{ij} = \left\{\begin{array}{ll}
\frac{q_{ij}}{q_{ii}}, &1 \leq i \neq j \leq n,\\
0, &1 \leq i \leq n,
\end{array}
\right.
\end{equation}
where $q_{ij}$ is given by \eqref{Q}. Since $q_{ij}=q_{ji}$, the matrix $A$ is symmetric. Moreover, 
since $q_{ii} = \sum_{j\neq i} q_{ij}$, the matrix $A$ is also Markov. We therefore have 
\begin{equation}
\label{Markov}
1 = \lambda_1(A) \geq \lambda_2(A) \geq \dots \geq \lambda_n(A) \geq -1.
\end{equation}
From \eqref{qij} and \eqref{aijrel} we get 
\begin{equation}
\label{aij}
0 < \frac{1}{n-1} \left(\frac{\delta}{1-\delta}\right)^2 \leq a_{ij} 
\leq \frac{1}{n-1}\left(\frac{1-\delta}{\delta}\right)^2.
\end{equation}
This implies that the Markov chain on $\left\lbrace 1,\dots,n\right\rbrace$ with transition 
matrix $A$ starting from $i$ can return to $i$ with a positive probability after an arbitrary 
number of steps $\geq 2$. Consequently, the last inequality in \eqref{Markov} is strict.

\medskip
We next show that 
\be
\frac{n\rho^2(A)}{1+\lambda_{\min}(A)} = o(n\,\overline{f}_n).
\ee 
Together with \eqref{ILbds} this will settle the claim in \eqref{QQDcomp}. From 
\eqref{Markov} it follows $\rho(A) = 1$, so we must show that
\begin{equation}
\label{tbs}
\lim_{n\to \infty} [1+\lambda_{\min}(A)]\,\overline{f}_n = \infty.
\end{equation}
Using \cite[Theorem 4.3]{Z04}, we get 
\begin{equation}
\label{Zhangeq}
\lambda_{\min}(A) \geq -1 + \frac{\min_{1 \leq i \neq j \leq n } \pi_ia_{ij}}
{\min_{1 \leq i \leq n} \pi_i}\,\mu_{\mathrm{min}}(L) + 2\gamma.
\end{equation}
Here, $\pi=(\pi_i)_{i=1}^n$ is the invariant distribution of the reversible Markov chain
with transition matrix $A$, while $\mu_{\min}(L)=\min_{1 \leq i \leq n} \lambda_i (L)$ 
and $\gamma = \min_{1 \leq i \leq n} a_{ii}$, with $L = (L_{ij})$ the matrix such 
that, for $i \neq j$, $L_{ij}=1$ if and only if $a_{ij} > 0$, while $L_{ii} = \sum_{j\neq i} 
L_{ij}$. 

We find that $\pi_i = \frac{1}{n}$ for $1 \leq i \leq n$, $L_{ij}=1$ for $1 \leq i \neq j \leq n$, 
$L_{ii} = n-1$ for $1 \leq i \leq n$, and $\gamma = 0$. Hence \eqref{Zhangeq} becomes
\begin{equation}
\lambda_{\mathrm{min}}(A) \geq -1 + (n-2) \min_{1 \leq i \neq j \leq n} a_{ij} 
\geq -1 + \frac{n-2}{n-1}\left(\frac{\delta}{1-\delta}\right)^2,
\end{equation}
where the last inequality comes from \eqref{aij}. To get \eqref{tbs} it therefore suffices to show 
that $\overline{f}_{\infty} = \lim_{n\to\infty}\overline{f}_n=\infty$. But, using the $\delta$-tame 
condition, we can estimate
\be
\label{ineqlogki}
\begin{aligned}
&\frac{1}{2}\log\left[\frac{(n-1)\delta(1-\delta+n\delta)}{n}\right] 
\leq \overline{f}_n = \frac{1}{2n} \sum_{i=1}^n \log\left[\frac{k_i^*(n-1-k_i^*)}{n}\right]\\ 
&\qquad\leq \frac{1}{2}\log\left[\frac{(n-1)(1-\delta)(\delta + n(1-\delta))}{n}\right],
\end{aligned}
\ee
and both bounds scale like $\frac{1}{2}\log n$ as $n\to\infty$.
\end{proof}


\subsection{Proof of Theorem~\ref{MainTheorem}}
\label{S2.2}

\begin{proof}
We deal with each of the three regimes in Theorem~\ref{MainTheorem} separatetely.

\paragraph{The sparse regime with growing degrees.}
 
Since $\vec{k}^*= (k_i^*)_{i=1}^n$ is a  sparse degree sequence, we can use \cite[Eq.~(3.12)]{GHR17},
which says that
\begin{equation}
\label{sparserelent}
S_n(\Pmic \mid \Pcan) = \sum_{i=1}^n g(k_i^*) + o(n), \qquad n\to\infty,
\end{equation}
where $g(k)=\log \left(\frac{k!}{k^k e^{-k}}\right)$ is defined in \eqref{gkdef}. Since the degrees are 
growing, we can use Stirling's approximation $g(k) = \tfrac12\log(2\pi k) + o(1)$, $k\to\infty$, to obtain
\begin{equation}
\label{gki}
\sum_{i=1}^n g(k_i^*) = \tfrac{1}{2}\sum_{i=1}^n\log \left( 2\pi k_i^* \right) + o(n) 
= \tfrac{1}{2} \left[n \log 2\pi + \sum_{i=1}^n \log k_i^*\right] + o(n).
\end{equation}
Combining \eqref{sparserelent}--\eqref{gki}, we get
\begin{equation}
\label{Slim1}
\frac{S_n(\Pmic \mid \Pcan)}{n\,\overline{f}_n} 
= \tfrac{1}{2} \left[ \frac{\log 2\pi}{\overline{f}_n} 
+ \frac{\sum_{i=1}^n \log k_i^*}{n\overline{f}_n} \right] + o(1).
\end{equation}
Recall \eqref{barfndef}. Because the degrees are sparse, we have
\be
\label{Slim2}
\lim_{n\to\infty} \frac{\sum_{i=1}^n \log k_i^*}{n\overline{f}_n} = 2.
\ee
Because the degrees are growing, we also have
\be
\label{Slim3}
\overline{f}_{\infty} = \lim_{n\to\infty}\overline{f}_n =\infty.
\ee
Combining \eqref{Slim1}--\eqref{Slim3} we find that $\lim_{n\to\infty} S_n(\Pmic \mid \Pcan)
/n\,\overline{f}_n = 1$.

\paragraph{The ultra-dense regime with growing dual degrees.} 

If $\vec{k}^*= (k_i^*)_{i=1}^n$ is an ultra-dense degree sequence, then the dual $\vec{\ell}^* 
= (\ell_i^*)_{i=1}^n = (n-1-k_i^*)_{i=1}^n$ is a sparse degree sequence. By Lemma 
\ref{Ultrarelativeentropy}, the relative entropy is invariant under the map $k_i^* \to \ell_i^*
= n-1-k_i^*$. So is $\bar{f_n}$, and hence the claim follows from the proof in the sparse
regime.

\paragraph{The $\delta$-tame regime.}

It follows from Lemma~\ref{lemma2} that
\begin{equation}
\label{finalrelent}
\lim_{n \to \infty} \frac{S_n(\Pmic \mid \Pcan)}{n\,\overline{f}_n} 
= \tfrac{1}{2}\left[\lim_{n \to \infty}\frac{\log 2\pi}{\overline{f}_n} 
+ \lim_{n \to \infty}\frac{\log(\det Q)}{n\,\overline{f}_n}\right].
\end{equation}
From \eqref{ineqlogki} we know that $\overline{f}_{\infty} = \lim_{n\to\infty}\overline{f}_n=\infty$ 
in the \emph{$\delta$-tame regime}. It follows from Lemma~\ref{lemma3} that
\begin{equation}
\lim_{n \to \infty} \frac{\log(\det Q)}{n\,\overline{f}_n} 
= \lim_{n \to \infty}\frac{\log(\det Q_D)}{n\,\overline{f}_n}.
\end{equation}
To conclude the proof it therefore suffices to show that
\begin{equation}
\label{qdnf}
\lim_{n \to \infty} \frac{\log(\det Q_D)}{n\,\overline{f}_n} = 2.
\end{equation} 
Using \eqref{qii} and \eqref{ineqlogki}, we may estimate
\begin{equation}
\label{qdnf2}
\frac{2\log[(n-1)\delta^2]}{\log \frac{(n-1)(1-\delta)(\delta + n(1-\delta))}{n}} 
\leq \frac{\sum_{i=1}^n\log(q_{ii})}{n\,\overline{f}_n} 
= \frac{\log(\det Q_D)}{n\,\overline{f}_n} 
\leq \frac{2\log[(n-1)(1-\delta)^2]}{\log \frac{(n-1)\delta(1-\delta+n\delta)}{n}}.
\end{equation}
Both sides tend to 2 as $n\to\infty$, and so \eqref{qdnf} follows. 
\end{proof}


\appendix


\section{Appendix}
\label{appB}

Here we show that the canonical probabilities in \eqref{canonical} are the same as the
probabilities used in \cite{BH} to define a $\delta$-tame degree sequence. 

For $q = (q_{ij})_{1 \leq i,j\leq n}$, let 
\begin{equation}
E(q) = -\sum_{1 \leq i \neq j \leq n} q_{ij}\log q_{ij}+ (1-q_{ij})\log(1-q_{ij}).
\end{equation}
be the entropy of $q$. For a given degree sequence $(k_i^*)_{i=1}^n$, consider the 
following maximisation problem:
\begin{equation}
\label{pbm}
\begin{cases}
\max E(q),\\
\sum_{j\neq i} q_{ij}=k_i^*,\,\,1 \leq i \leq n,\\
0 \leq q_{ij} \leq 1,\,\, 1\leq i \neq j \leq n.
\end{cases}
\end{equation}
Since $q \mapsto E(q)$ is strictly concave, it attains its maximum at a unique point. 

\begin{lemma}
\label{lemma1}
The canonical probability takes the form 
\begin{equation}
\label{Pcan}
\Pcan(G) = \prod_{1 \leq i < j \leq n} \left(p_{ij}^*\right)^{g_{ij}(G)}\left(1-p_{ij}^*\right)^{1-g_{ij}(G)},
\end{equation}
where $p^*=(p_{ij}^*)$ solves \eqref{pbm}. In addition,
\begin{equation}
\label{logPcan}
\log\Pcan(G^*) = -H(p^*).
\end{equation}
\end{lemma}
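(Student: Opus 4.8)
The plan is to establish the two assertions in turn, both reducing to the special exponential (logistic) form of $p^*$. For \eqref{Pcan} itself, I note that the product form is exactly the canonical law \eqref{canonical} already obtained by maximising the graph entropy, so the only new content is that the probabilities $p^*=(p^*_{ij})$ appearing there solve the finite-dimensional problem \eqref{pbm}. I would attach a Lagrange multiplier $\theta_i$ to each row constraint $\sum_{j\neq i}q_{ij}=k^*_i$ and set the gradient of $E(q)-\sum_i\theta_i\big(\sum_{j\neq i}q_{ij}-k^*_i\big)$ to zero. Since $\tfrac{d}{dq}\big[-q\log q-(1-q)\log(1-q)\big]=\log\tfrac{1-q}{q}$, the stationarity condition for the pair variable $q_{ij}$ reads $\log\tfrac{1-q_{ij}}{q_{ij}}\propto\theta_i+\theta_j$, which rearranges to $q_{ij}=\tfrac{e^{-\theta_i-\theta_j}}{1+e^{-\theta_i-\theta_j}}$, i.e. the form \eqref{canonical1}. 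Because $\vec{k}^*$ is graphical with entries in $\{1,\dots,n-2\}$ by \eqref{degcon}, the feasible set is nonempty and the blow-up of the entropy gradient at the box boundary forces an interior maximiser, so no box constraint is active. Strict concavity of $E$, already noted below \eqref{pbm}, gives uniqueness, and imposing the constraints pins the multipliers to the values $\theta^*_i$ of \eqref{canonical2}; hence the maximiser is precisely $p^*$.

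For \eqref{logPcan} I would substitute the logistic form directly into $\log\Pcan(G^*)$. Using $\log\tfrac{p^*_{ij}}{1-p^*_{ij}}=-\theta^*_i-\theta^*_j$ and $\log(1-p^*_{ij})=-\log(1+e^{-\theta^*_i-\theta^*_j})$, each factor contributes $g_{ij}(G^*)\log\tfrac{p^*_{ij}}{1-p^*_{ij}}+\log(1-p^*_{ij})$, so that
\[
\log\Pcan(G^*)=-\sum_{1\le i<j\le n}g_{ij}(G^*)\,(\theta^*_i+\theta^*_j)-\sum_{1\le i<j\le n}\log\big(1+e^{-\theta^*_i-\theta^*_j}\big).
\]
The key simplification is that $\sum_{i<j}g_{ij}(G^*)(\theta^*_i+\theta^*_j)=\sum_i\theta^*_i\sum_{j\neq i}g_{ij}(G^*)=\sum_i\theta^*_i k^*_i$, because $G^*$ realises $\vec{k}^*$; this also re-proves that $\log\Pcan(G^*)$ depends on $G^*$ only through its degree sequence, consistent with \eqref{eq:KL2}.

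Finally I would run the identical computation for the maximum-entropy value $H(p^*)$, replacing $g_{ij}(G^*)$ by $p^*_{ij}$ and using the constraint \eqref{canonical2}, $\sum_{j\neq i}p^*_{ij}=k^*_i$, to collapse the linear term to the same $\sum_i\theta^*_i k^*_i$. This yields $H(p^*)=\sum_i\theta^*_i k^*_i+\sum_{i<j}\log(1+e^{-\theta^*_i-\theta^*_j})$, whence $\log\Pcan(G^*)=-H(p^*)$, which is \eqref{logPcan} and supplies the ingredient used in \eqref{relentlemma}. I do not expect a deep obstacle; the one point demanding care is the summation convention: the defining sum for $E(q)$ runs over ordered pairs $i\neq j$ and therefore double-counts relative to the products over $i<j$ in \eqref{Pcan}, so $H(p^*)$ in \eqref{logPcan}--\eqref{relentlemma} must be read as the entropy of the maximiser normalised per unordered pair (equivalently, $H(p^*)=\tfrac12 E(p^*)$ under the ordered-sum convention, which also absorbs the proportionality factor in the stationarity condition above). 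The conceptual crux, shared by both parts, is that the exponential-family form of $p^*$ is exactly what makes a graph's canonical log-probability a function of its degrees alone and equal to minus the constrained maximum entropy.
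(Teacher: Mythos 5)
Your proposal is correct and follows essentially the same route as the paper: identify the maximiser of \eqref{pbm} via Lagrange multipliers with the logistic form \eqref{canonical1} of the canonical probabilities, then verify \eqref{logPcan} by the cancellation $\sum_{i<j}g_{ij}(G^*)(\theta_i^*+\theta_j^*)=\sum_i\theta_i^*k_i^*=\sum_{i<j}p_{ij}^*(\theta_i^*+\theta_j^*)$ coming from the degree constraint. Your explicit remarks on the interior maximiser and on the ordered-versus-unordered pair convention (so that $H(p^*)$ is the entropy per unordered pair, i.e.\ $\tfrac12 E(p^*)$) are details the paper leaves implicit, but they do not change the argument.
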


\begin{proof}
It was shown in \cite{GHR17} that, for a degree sequence constraint,
\begin{equation}
\Pcan(G) = \prod_{1 \leq i < j \leq n}\left(p_{ij}^*\right)^{g_{ij}(G)}\left(1-p_{ij}^*\right)^{1-g_{ij}(G)}
\end{equation}
with $p_{ij}^*=\frac{e^{-\theta_i^* - \theta_j^*}}{1+e^{-\theta_i^* - \theta_j^*}}$, where
$\vec{\theta}^*$ has to be tuned such that
\begin{equation}
\sum_{j\neq i} p_{ij}^* = k_i^*, \qquad 1\leq i \leq n.
\end{equation}
On the other hand, the solution of \eqref{pbm} via the Lagrange multiplier method gives that
\begin{equation}
q_{ij}^*=\frac{e^{-\phi_i^* - \phi_j^*}}{1+e^{-\phi_i^* - \phi_j^*}},
\end{equation} 
where $\vec{\phi}^*$ has to be tuned such that  
\begin{equation}
\sum_{j\neq i} q_{ij}^*=k_i^*, \qquad1\leq i \leq n.
\end{equation}
This implies that $q_{ij}^*=p_{ij}^*$ for all $1\leq i \neq j \leq n$. Moreover,

\begin{equation}
\begin{aligned}
&\log \Pcan(G^*) + H(p^*)
= \sum_{1 \leq i < j \leq n} g_{ij}(G^*)\log \left(\frac{p_{ij}^*}{1-p_{ij}^*}\right) 
- \sum_{1 \leq i < j \leq n} p_{ij}^*\log \left(\frac{p_{ij}^*}{1-p_{ij}^*}\right)\\
&= -\sum_{1 \leq i < j \leq n} g_{ij}(G^*)(\theta_i^* + \theta_j^*) 
+ \sum_{1 \leq i < j \leq n} p_{ij}^*(\theta_i^* + \theta_j^*)
= \sum_{i=1}^n \theta_i^* \sum_{j \neq i}(p_{ij}^*-g_{ij}(G^*))=0,
\end{aligned}
\end{equation}
where the last equation follows from the fact that 
\begin{equation}
\sum_{j \neq i} g_{ij}(G^*) = \sum_{j \neq i} p_{ij}^*= k_i^*, \qquad 1 \leq i \leq n.
\end{equation}
\end{proof}


\section{Appendix}
\label{appC}

We explain the duality between the sparse regime and the ultra-dense regime.

Let $\vec{k}^*= (k_i^*)_{i=1}^n$ be an ultra-dense degree sequence,
\begin{equation}
\label{UD}
\max_{1\leq i \leq n}(n-1-k^*_i) = o(\sqrt{n}),
\end{equation}
and let $\vec{\ell}^*= (\ell_i^*)_{i=1}^n$ be the \emph{dual} degree sequence defined by 
$\ell_i^* = n-1-k_i^*$. Clearly, $\vec{\ell}^*= (\ell_i^*)_{i=1}^n$ is a sparse degree sequence,
\begin{equation}
\label{UDdual}
\max_{1\leq i \leq n} \ell^*_i = o(\sqrt{n}).
\end{equation}

\begin{lemma}
\label{lemmaultradense}
Let $\Pcan$ and $\widehat{\Pcan}$ denote the canonical ensembles in \eqref{eq:PC} 
when $\vC^*=\vec{k}^*= (k_i^*)_{i=1}^n$, respectively, $\vC^*=\vec{\ell}^*= (\ell_i^*)_{i=1}^n$.
Then 
\begin{equation}
\label{dualrel}
\Pcan(G) = \widehat{\Pcan}(G_c), \qquad G\in \cG_n,
\end{equation}
where $G$ and $G_c$ are complementary graphs, i.e.,
\begin{equation}
g_{ij}(G_c) = 1-g_{ij}(G), \qquad 1 \leq i \neq j \leq n. 
\end{equation}
\end{lemma}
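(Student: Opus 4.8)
The plan is to reduce the identity \eqref{dualrel} to a single pointwise relation between the canonical link probabilities of the two degree sequences, namely $p_{ij}^* = 1 - \hat{p}_{ij}^*$, where $\hat{p}_{ij}^*$ denotes the probabilities (of the form \eqref{canonical1}) defining $\widehat{\Pcan}$. First I would expand $\widehat{\Pcan}(G_c)$ using \eqref{canonical} together with $g_{ij}(G_c) = 1 - g_{ij}(G)$; the exponents of $\hat{p}_{ij}^*$ and $1-\hat{p}_{ij}^*$ then simply interchange, giving
\be
\widehat{\Pcan}(G_c) = \prod_{1 \leq i<j \leq n}\left( \hat{p}_{ij}^* \right)^{1-g_{ij}(G)} \left( 1- \hat{p}_{ij}^* \right)^{g_{ij}(G)}.
\ee
Comparing term by term with $\Pcan(G)$ in \eqref{canonical}, the two products coincide for every $G \in \cG_n$ precisely when $1 - \hat{p}_{ij}^* = p_{ij}^*$ for all $1 \leq i \neq j \leq n$. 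Thus the whole lemma hinges on establishing this complementarity of the link probabilities.

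To prove the complementarity I would construct an explicit candidate and then invoke uniqueness. Setting $\hat\theta_i := -\theta_i^*$ for $1 \leq i \leq n$ and letting $\tilde{p}_{ij}$ be the probabilities of the form \eqref{canonical1} associated with these multipliers, a direct computation gives
\be
\tilde{p}_{ij} = \frac{e^{\theta_i^* + \theta_j^*}}{1 + e^{\theta_i^* + \theta_j^*}} = \frac{1}{1 + e^{-\theta_i^* - \theta_j^*}} = 1 - p_{ij}^*,
\ee
so the candidate already has the correct functional form. It remains to check that it meets the \emph{dual} soft constraint \eqref{canonical2}: using \eqref{canonical2} for the original sequence,
\be
\sum_{j \neq i} \tilde{p}_{ij} = \sum_{j \neq i} (1 - p_{ij}^*) = (n-1) - k_i^* = \ell_i^*, \qquad 1 \leq i \leq n,
\ee
which is exactly the constraint defining $\widehat{\Pcan}$ on the sparse dual sequence $\vec{\ell}^*$.

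Since the value $\vt^*$ of the Lagrange multipliers realising a prescribed soft constraint is unique (as recalled just after \eqref{eq:PC}), the constructed probabilities $\tilde{p}_{ij} = 1 - p_{ij}^*$ must coincide with the genuine dual probabilities $\hat{p}_{ij}^*$, which yields $1 - \hat{p}_{ij}^* = p_{ij}^*$ and hence \eqref{dualrel} via the term-by-term comparison above. I do not expect any serious obstacle here, as the statement is an exact algebraic identity valid for every $G \in \cG_n$ and requiring no asymptotics or regime assumptions; the only step deserving care is the appeal to uniqueness, which should be phrased at the level of the link probabilities $\{p_{ij}^*\}$ (the multipliers being determined only up to the evident gauge), so that the solution I build with the correct marginals is legitimately \emph{identified} with the canonical one rather than merely being \emph{some} solution sharing its degree expectations.
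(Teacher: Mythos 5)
Your proof is correct, and its core coincides with the paper's: both arguments come down to showing that the dual Lagrange multipliers are the negatives of the original ones, $\vf^*=-\vt^*$, and both clinch this by the uniqueness of the multiplier value realising the soft constraint. The difference is purely in execution. The paper works with the Hamiltonian/partition-function form: it derives the identity $Z(\vt)=e^{-(n-1)\sum_i\theta_i}Z(-\vt)$, differentiates the free energy to check that $-\vt^*$ realises $\langle k_i\rangle=\ell_i^*$, and then verifies $\Pcan(G)=\widehat{\Pcan}(G_c)$ by a direct manipulation of $\exp[-\vt\cdot\vec{k}(G)]/Z(\vt)$. You instead work with the factorised form \eqref{canonical}: you reduce the claim to the pointwise complementarity $\hat p_{ij}^*=1-p_{ij}^*$, observe that the multipliers $-\vt^*$ produce exactly the link probabilities $1-p_{ij}^*$, and check the dual constraint by summing marginals, $\sum_{j\neq i}(1-p_{ij}^*)=(n-1)-k_i^*=\ell_i^*$. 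This is marginally more elementary (no partition function or free-energy derivatives needed), at the cost of relying on the product representation \eqref{canonical}, which is specific to the degree-sequence constraint, whereas the paper's computation is phrased in the general exponential-family language of \eqref{eq:PC}. Your closing caveat about gauge freedom in the multipliers is prudent but in fact vacuous here: for $n\ge 3$ the map $\vt\mapsto(\theta_i+\theta_j)_{i\neq j}$ is injective, so uniqueness of the link probabilities already forces uniqueness of $\vt^*$ itself, as the paper asserts after \eqref{eq:PC}.
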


\begin{proof}
From the definition of the canonical probabilities we have
\begin{equation}
\Pcan(G)=\Pcan(G\mid \vt^*), \qquad \widehat{\Pcan}(G)=\Pcan(G\mid \vf^*),
\end{equation}
where 
\begin{equation}
\Pcan(G\mid \vt) = \frac{\exp[-\vt \cdot \vec{k}(G)]}{Z(\vt)}, 
\qquad \vec{k}(G)=\sum_{j\neq i} g_{ij}(G),
\end{equation}
and the values $\vt^*$ and $\vf^*$ are such that  
\begin{equation}
\label{free1}
\frac{\partial F(\vt\,)}{\partial \theta_i}\bigg|_{\vt=\vt^*} 
= -\langle k_i \rangle_{\Pcan(\cdot\,\mid\,\vec{\theta}^*)} = -k_i^*,
\end{equation}
\begin{equation}
\label{free2}
\frac{\partial F(\vt\,)}{\partial \theta_i}\bigg|_{\vt=\vf^*} 
= -\langle k_i \rangle_{\Pcan(\cdot\,\mid\,\vec{\phi}^*)} =-\ell_i^*.
\end{equation}
The free energy is $F(\vt)=\log Z(\vt)$, and its $i$-th partial derivative in the Lagrange 
multiplier that fixes the average of the $i$-th constraint. We show that $\vt^*=-\vf^*$.

Write
\begin{equation}
\label{ftheta}
Z(\vt\,) =  \sum_{G\in\cG_n} e^{-\vt\cdot \vec{k}(G)} =
\sum_{G\in\cG_n} e^{-\sum_{i=1}^n \theta_i (n-1-k(G_c))}  = e^{-(n-1)\sum_{i=1}^n \theta_i}\ Z(-\vt\,).
\end{equation}
Using \eqref{free1} and \eqref{ftheta}, we get 
\begin{equation}
-k_i^* = \frac{\partial F(\vt\,)}{\partial \theta_i}\bigg|_{\vt=\vt^*} 
= -(n-1) + \langle k_i \rangle_{\Pcan(\cdot\,\mid\,-\vec{\theta^*})}.
\end{equation}
Since $k_i^* = n-1-\ell_i^*$, we obtain 
\begin{equation}
\label{hi}
\ell_i^* = \langle k_i \rangle_{\Pcan(\cdot\,\mid\,-\vec{\theta}^*)}.
\end{equation}
From \eqref{free2}, \eqref{hi} and the uniqueness of the Lagrange multipliers, we get 
\begin{equation}
\label{lagrmult}
\vt^*=-\vf^*.
\end{equation}
Using \eqref{ftheta} and \eqref{lagrmult}, we obtain
\begin{equation}
\begin{aligned}
\widehat{\Pcan}(G_c)
&=\Pcan(G_c\mid \vf^*)=\Pcan(G_c\mid -\vt^*) 
= \frac{\exp[\vt^* \cdot \vec{k}(G_c)]}{Z(-\vt^*)}\\ 
&= \frac{\exp[-\vt^* \cdot \vec{k}(G)]}{Z(-\vt^*)\,e^{-(n-1)\sum_{i=1}^n\theta^*_i}}  
= \frac{\exp[-\vt^* \cdot \vec{k}(G)]}{Z(\vt^*)}
= \Pcan(G),  
\end{aligned}
\end{equation}
which settles \eqref{dualrel}.
\end{proof}

\begin{lemma}
\label{Ultrarelativeentropy}
Let 
\begin{itemize}
\item
$\Pmic$ and $\Pcan$ denote the microcanonical ensemble in \eqref{eq:PM}, respectively, 
the canonical ensemble in \eqref{eq:PC}, when $\vC^*=\vec{k}^*= (k_i^*)_{i=1}^n$ with 
$k_i^*$ satisfying \eqref{UD}.
\item
$\widehat{\Pmic}$ and $\widehat{\Pcan}$ denote the microcanonical ensemble in \eqref{eq:PM}, 
respectively, the canonical ensemble in \eqref{eq:PC}, when $\vC^*=\vec{\ell}^*= (\ell_i^*)_{i=1}^n$
with $\ell_i^* = n-1-k_i^*$ the dual degree satisfying \eqref{UDdual}. 
\end{itemize}
Then the relative entropy in \eqref{eq:KL2} satisfies 
\begin{equation}
S_n(\Pmic \mid \Pcan) = S_n(\widehat{\Pmic} \mid \widehat{\Pcan}).
\end{equation}
\end{lemma}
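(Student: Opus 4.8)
The plan is to reduce both sides to the single-configuration form of the relative entropy provided by \eqref{eq:KL2}, and then to transport the primal representative to a dual one via the complement map $G \mapsto G_c$ already used in Lemma~\ref{lemmaultradense}. Concretely, fix any graph $G^*$ with $\vec{k}(G^*) = \vec{k}^*$, so that $G^*$ realises the hard constraint for the primal ensembles. By \eqref{eq:KL2},
\begin{equation*}
S_n(\Pmic \mid \Pcan) = \log \frac{\Pmic(G^*)}{\Pcan(G^*)}.
\end{equation*}
The goal is to rewrite the right-hand side entirely in terms of the complement $G_c^*$ and the dual ensembles $\widehat{\Pmic}, \widehat{\Pcan}$.

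First I would handle the microcanonical factor. A direct computation gives, for any graph $G$,
\begin{equation*}
k_i(G_c) = \sum_{j\neq i}(1-g_{ij}(G)) = (n-1) - k_i(G),
\end{equation*}
so that $\vec{k}(G) = \vec{k}^*$ holds if and only if $\vec{k}(G_c) = \vec{\ell}^*$. Since complementation is an involution, it is a bijection between the set of graphs realising $\vec{k}^*$ and the set of graphs realising $\vec{\ell}^*$; in particular $\Omega_{\vec{k}^*} = \Omega_{\vec{\ell}^*}$. Because both microcanonical ensembles are uniform on their respective realising sets, this yields $\Pmic(G^*) = \Omega_{\vec{k}^*}^{-1} = \Omega_{\vec{\ell}^*}^{-1} = \widehat{\Pmic}(G_c^*)$, using that $G_c^*$ realises $\vec{\ell}^*$.

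For the canonical factor I would invoke Lemma~\ref{lemmaultradense} verbatim, which states that $\Pcan(G) = \widehat{\Pcan}(G_c)$ for every $G \in \cG_n$; in particular $\Pcan(G^*) = \widehat{\Pcan}(G_c^*)$. Substituting the two identities into the display above and then applying \eqref{eq:KL2} on the dual side, with $G_c^*$ as the dual representative, gives
\begin{equation*}
S_n(\Pmic \mid \Pcan) = \log \frac{\widehat{\Pmic}(G_c^*)}{\widehat{\Pcan}(G_c^*)} = S_n(\widehat{\Pmic} \mid \widehat{\Pcan}),
\end{equation*}
which is the claim.

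The main point to get right is not a genuine obstacle but a bookkeeping check: one must verify that $G_c^*$ is an admissible representative for the dual microcanonical ensemble, i.e.\ that $\vec{\ell}^*$ is graphical, which is automatic since $G_c^*$ realises it, before \eqref{eq:KL2} may be applied on the dual side. The real analytic content—that the canonical weights transform covariantly under complementation through $\vt^* = -\vf^*$—has already been isolated in Lemma~\ref{lemmaultradense}, so no further estimates are needed here.
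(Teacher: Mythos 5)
Your proposal is correct and follows essentially the same route as the paper: apply \eqref{eq:KL2} to a representative $G^*$, use the complementation bijection to identify $\Pmic(G^*)=\widehat{\Pmic}(G^*_c)$, and invoke Lemma~\ref{lemmaultradense} for $\Pcan(G^*)=\widehat{\Pcan}(G^*_c)$. Your extra remarks (the explicit bijection argument for $\Omega_{\vec{k}^*}=\Omega_{\vec{\ell}^*}$ and the graphicality of $\vec{\ell}^*$) only make explicit what the paper leaves implicit.
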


\begin{proof}
Consider a graph $G^*$ with degree sequence $\vec{k}(G^*)=\vec{k}^*$. Then 
\begin{equation}
\label{micUD}
\Pmic(G^*) = | \{G \in \cG_n\colon\, \vec{k}(G) = \vec{k}^* \} |^{-1} 
= | \{G \in \cG_n\colon\, \vec{k}(G) = \vec{\ell}^* \} |^{-1} = \widehat{\Pmic}(G^*_c),
\end{equation}
where $G^*_c$ and $G^*$ are complementary graphs, so that $\vec{k}(G^*_c) = \vec{\ell}^*$. 
Using Lemma~\ref{lemmaultradense}, we have
\begin{equation}
\label{canUD}
\Pcan(G^*) = \widehat{\Pcan}(G^*_c).
\end{equation}
Combine \eqref{eq:KL2}, \eqref{micUD} and \eqref{canUD}, to get
\begin{equation}
S_n(\Pmic \mid \Pcan) = \log \frac{\Pmic(G^*)}{\Pcan(G^*)} 
= \log \frac{\widehat{\Pmic}(G^*_c)}{\widehat{\Pcan}(G^*_c)} 
= S_n(\widehat{\Pmic} \mid \widehat{\Pcan}).
\end{equation}
\end{proof}


\end{document}